\DeclareSymbolFontAlphabet{\amsmathbb}{AMSb}%
\definecolor{cblue}{rgb}{0.16, 0.32, 0.75}
\definecolor{cred}{rgb}{0.7, 0.11, 0.11}
\newtheorem{proposition}{Proposition}[section]
\newtheorem{lemma}{Lemma}[section]
\theoremstyle{remark}
\def\oper{{\mathchoice{\rm 1\mskip-4mu l}{\rm 1\mskip-4mu l}
		{\rm 1\mskip-4.5mu l}{\rm 1\mskip-5mu l}}}
\newcommand{\tr}{\operatorname{Tr}}
\newcommand{\ketbra}[2]{| #1 \rangle\!\langle #2 | }
\renewcommand{\Re}{\mathop{\mathrm{Re}}}
\renewcommand{\Im}{\mathop{\mathrm{Im}}}
\newcommand{\hilb}{\mathcal{H}}
\renewcommand{\i}{\mathrm{i}}
\newcommand{\e}{\mathrm{e}}
\newcommand{\m}{\mathrm{m}}
\DeclareMathAlphabet\mathbfcal{OMS}{cmsy}{b}{n}
\begin{document}	
	\title{\textbf{On the classicality of quantum dephasing processes}}
	
	\author[$\hspace{0cm}$]{Davide Lonigro$^{1,2,}$\footnote{davide.lonigro@ba.infn.it}}
	\affil[$1$]{\small Dipartimento di Fisica and MECENAS, Universit\`{a} di Bari, I-70126 Bari, Italy}
	\affil[$2$]{\small INFN, Sezione di Bari, I-70126 Bari, Italy}
	
	\author[$\hspace{0cm}$]{Dariusz Chru\'sci\'nski$^{3,}$\footnote{darch@fizyka.umk.pl}}
	\affil[$3$]{\small Institute of Physics, Faculty of Physics, Astronomy and Informatics, Nicolaus Copernicus University, Grudziadzka 5/7, 87-100 Toru\'n, Poland}
	
	\maketitle
	\vspace{-0.5cm}	
	
	\begin{abstract}
		We analyze the multitime statistics associated with pure dephasing systems repeatedly probed with sharp measurements, and search for measurement protocols whose statistics satisfies the Kolmogorov consistency conditions possibly up to a finite order. We find a rich phenomenology of quantum dephasing processes which can be interpreted in classical terms. In particular, if the underlying dephasing process is Markovian, we find sufficient conditions under which classicality at every order can be found: this can be reached by choosing the dephasing and measurement basis to be fully compatible or fully incompatible, that is, mutually unbiased bases (MUBs). For non-Markovian processes, classicality can only be proven in the fully compatible case, thus revealing a key difference between Markovian and non-Markovian pure dephasing processes.
	\end{abstract}
	
	\maketitle
	
	\section{Introduction}\label{sec:1}
	
	Quantum processes are fundamentally different from classical ones: therefore, a proper understanding of the classical--quantum boundary, involving the identification of the \textit{genuinely quantum} features of physical systems---those which cannot be simulated and/or reproduced by any system satisfying the laws of classical physics---is of particular importance~\cite{Zurek,Schloss,DEC2,Paterek,HHHH}. Aside from its intrinsic theoretical interest, this boundary also has practical implications for quantum technologies: taking advantage of the underlying quantum nature of the natural world, e.g.~quantum correlations~\cite{Paterek} and coherence~\cite{Zurek,Plenio}, is the key of diverse technological proposals such as quantum teleportation, cryptography or computation~\cite{QIT}. A groundbreaking example is provided by the violation of Bell inequalities by quantum correlations~\cite{Bell}; among many other such examples, we may point out quantum states of light corresponding to nonpositive values of their Wigner function (or singular Glauber--Sudarshan P-representation)~\cite{Zoller,Q-Optics}.
	
	A thorough analysis of physical systems must involve both its free evolution as well as its behavior under external interventions; this information can be encoded by a multitime statistics of joint probability distributions (correlation functions). In this framework, it was proposed in~\cite{Andrea-1,Andrea-2} to define the classicality of quantum systems in terms of the classicality of its multitime statistics. Let us briefly revise this approach. We shall consider a quantum system living on a Hilbert space $\hilb_{\rm S}$ with finite dimension $\dim\hilb_{\rm S}=d$, interacting with an environment (or bath) associated with a possibly infinite-dimensional Hilbert space $\hilb_{\rm B}$; let $\textbf{H}$ the self-adjoint operator on $\hilb_{\rm S}\otimes\hilb_{\rm B}$ generating the global dynamics of the system and the environment---the Hamiltonian.\footnote{
		For simplicity we shall mostly focus on the case in which the latter is time-independent, thus generating a homogeneous unitary propagator $\textbf{U}_{t,s}=\e^{-\i(t-s)\textbf{H}}$ on $\hilb_{\rm S}\otimes\hilb_{\rm B}$ solving the corresponding Schr\"odinger equation; however, most of our discussion can be replicated without substantial differences for time-dependent Hamiltonians, provided that a (generally inhomogeneous) unitary propagator exists.} We will denote by $\textbf{U}_{t,t_0}$ the unitary propagator generated by $\textbf{H}$.
	Assuming that, at some initial time $t_0$, the global system is in an uncorrelated (product) state $\rho_{t_0}\otimes\varrho_{\rm B}\in\mathcal{B}(\hilb_{\rm S}\otimes\hilb_{\rm B})\simeq\mathcal{B}(\hilb_{\rm S})\otimes\mathcal{B}(\hilb_{\rm B})$, the map $\Lambda_{t,t_0}$ describing the reduced dynamics of the system reads, for any $t\geq t_0$,
	\begin{equation}
		\Lambda_{t,t_0}(\rho_{t_0})=\tr_{\rm B}\,\mathbfcal{U}_{t,t_0}(\rho_{t_0}\otimes\varrho_{\rm B}),\qquad\mathbfcal{U}_{t,t_0}=\textbf{U}_{t,t_0}(\cdot)\textbf{U}_{t,t_0}^\dag,
	\end{equation}
	this map being, by construction, completely positive and trace preserving (CPTP). The map is said to be CP-divisible if it satisfies the additional property $\Lambda_{t_2,t_0}=\Lambda_{t_2,t_1}\Lambda_{t_1,t_0}$ and $\Lambda_{t_2,t_1}$ is CPTP for all $t_2\geq t_1\geq t_0$.
	
	We shall consider the idealized scenario in which such a system is repeatedly probed via sharp, identical and instantaneous measurements at times $t_1,\dots,t_n$, and evolves freely between each couple of consecutive measurements. This procedure brings about a family of time-continuous joint probability distributions (multitime statistics) defined as follows. Letting $\{P_x\}_{x}\subset\mathcal{B}(\hilb_{\rm S})$ be a projection-valued measure (PVM) representing the measurement apparatus, and $\mathcal{P}_x=P_x(\cdot)P_x$, we define, for all $n\in\mathbb{N}$,
	\begin{equation}\label{eq:quantumprob}
		\mathbb{P}_n(x_n,t_n;x_{n-1},t_{n-1};\ldots;x_1,t_1)=\tr\left[\left(\mathcal{P}_{x_n}\otimes\mathcal{I}_{\rm B}\right)\mathbfcal{U}_{t_n,t_{n-1}}\cdots\left(\mathcal{P}_{x_1}\otimes\mathcal{I}_{\rm B}\right)\mathbfcal{U}_{t_1,t_0}(\rho_{t_0}\otimes\varrho_{\rm B})\right],
	\end{equation}
	with $\mathcal{I}_{\rm B}$ being the identity map on $\hilb_{\rm B}$. This is a legitimate family of probability distributions formally analogous to the one associated with a classical stochastic process. However, while in the classical case the Kolmogorov consistency condition,
	\begin{equation}\label{eq:consistency}
		\mathbb{P}_{n-1}\left(x_n,t_n;\dots;\cancel{x_j,t_j};\dots;x_1,t_1\right)=\sum_{x_j=0}^{d-1}\mathbb{P}_n(x_n,t_n;\dots;x_j,t_j;\dots;x_1,t_1),
	\end{equation}
	holds, the statistics associated with a quantum system as described above will generally \textit{violate} this condition. In this regard, notice that Eq.~\eqref{eq:consistency} essentially means that not performing a measurement at the time $t_j$ is equivalent to performing a measurement at the same time and then forgetting about its outcome---in other words, measurements do not affect the system, which is generally not true in the quantum realm. On the other hand, by the Kolmogorov extension theorem~\cite{Kolmogorov}, Eq.~\eqref{eq:consistency} is satisfied if and only if there exist some \textit{classical} stochastic process reproducing the full multitime statistics; in this case the results of the experiment may be consistently explained in the framework of classical physics, whence no genuinely quantum feature is exhibited by the system.
	
	Because of that, following~\cite{Andrea-1,Andrea-2}, we shall adopt this definition: the process defined by the global Hamiltonian $\textbf{H}$, the initial state $\rho_{t_0}\otimes\varrho_{\rm B}$, and the family of projectors $\{P_x\}_{x}$ is
	\begin{itemize}
		\item $N$-\textit{classical}, if Eq.~\eqref{eq:consistency} is satisfied for all $n=1,\dots,N$.		
		\item \textit{classical}, if Eq.~\eqref{eq:consistency} is satisfied for all $n\in\mathbb{N}$, i.e.~the process is $N$-classical for all $N$.	
	\end{itemize}
	Interestingly, a quantum process with classical multitime statistics always satisfies the Leggett--Garg inequalities (LGI)~\cite{LG0,LG1,LG2}, which can be interpreted as a temporal analog of Bell inequalities; a violation of LGI necessarily implies that the underlying process is genuinely quantum.
	
	For an $N$-classical process, the intrinsically quantum nature of the system will only possibly emerge after $N+1$ measurements in the prescribed PVM, but will be effectively ``hidden" otherwise---that is, $N+1$ is the minimal number of measurements that must be performed on the system in order to see a genuinely quantum behavior. Most importantly, all these definitions are strictly dependent on both the measurement PVM, the state $\rho_{t_0}$ in which the system is initially prepared, and the initial state $\varrho_{\rm B}$ of the environment: even assuming the latter to be fixed, a system exhibiting ($N$--)classical behavior when measured with a certain apparatus and/or initially prepared in a given state may ``reveal" its quantum nature when measured or prepared differently.
	
	A particularly simple scenario is observed when the multitime statistics satisfies the Markov property:
	\begin{equation}
		\mathbb{P}(x_n,t_n|x_{n-1},t_{n-1};\ldots;x_1,t_1)=\mathbb{P}(x_n,t_n|x_{n-1},t_{n-1}),
	\end{equation}
	with $\mathbb{P}(\,\cdot\,|\,\cdot\,)$ corresponding to the conditional probability. In this special case, the whole multitime statistics of the process can be entirely reconstructed by the $1$-time distribution $\mathbb{P}_1(x_1,t_1)$ and the conditional probability $\mathbb{P}(x_2,t_2|x_1,t_1)$, whence the full (infinite) hierarchy of consistency conditions~\eqref{eq:consistency} ends up reducing to the Chapman--Kolmogorov equations, which can be written as
	\begin{eqnarray}\label{eq:consistency_markov1}
		\mathbb{P}_1(x_2,t_2)&=&\sum_{x=0}^{d-1}\mathbb{P}_2(x_2,t_2;x,t),\qquad\qquad\;\;\,t_2\geq t\geq t_0;\\
		\label{eq:consistency_markov2}
		\mathbb{P}_2(x_2,t_2;x_1,t_1)&=&\sum_{x=0}^{d-1}\mathbb{P}_3(x_2,t_2;x,t;x_1,t_1),\qquad t_2\geq t\geq t_1\geq t_0,
	\end{eqnarray}
	thus reducing to only two consistency conditions; in particular, in the Markovian case, $3$-classicality is sufficient for (hence equivalent to) classicality. It is also worth noting that the validity of the Markov property is ensured whenever the \textit{regression formula} holds~\cite{Lax,Zoller}, that is, when the full multitime statistics~\eqref{eq:quantumprob} induced by the system can be expressed in terms of the reduced dynamics alone:
	\begin{equation}\label{eq:quantumprob_regr}
		\mathbb{P}_n(x_n,t_n;x_{n-1},t_{n-1};\ldots;x_1,t_1)=\tr\left[\mathcal{P}_{x_n}\Lambda_{t_n,t_{n-1}}\cdots\mathcal{P}_{x_1}\Lambda_{t_1,t_0}(\rho_{t_0})\right];
	\end{equation}
	conversely, if the statistics is Markovian \textit{and} Eq.~\eqref{eq:quantumprob_regr} holds for $n=1,2$, then Eq.~\eqref{eq:quantumprob_regr} holds for all $n$.
	
	While adopting in this paper these mathematical definitions of Markovianity and classicality, it should be stressed that both physical concepts have been variously identified with diverse, generally inequivalent, mathematical properties. We refer to the reviews \cite{NM2,Piilo-I,Piilo-II} for a general discussion about quantum non-Markovianity; various definitions of quantum Markovianity and the intricate relations between them are presented in~\cite{NM1,NM4}, while we refer to~\cite{NM3} for overview of various approaches to describe the dynamics of non-Markovian open systems, and~\cite{PR} for the mathematical and physical properties of non-Markovian dynamical maps.
	
	The definition adopted in this paper is closely related to the mathematical formulation of quantum Markov stochastic processes proposed in~\cite{QP1,QP2,QP3} (cf.~also the review~\cite{Modi-PRX}) and closely related to the recent approach to quantum Markovianity proposed in~\cite{kavan1,kavan2,kavan3}, where the Markovianity of the corresponding process is characterized in terms of the factorization of the so-called quantum process tensor of the underlying quantum system, this property in turn being equivalent to the validity of quantum regression~\cite{NM4}. For a discussion of Markovianity based on the quantum regression formula see also~\cite{Francesco,Bassano}; exact results on the validity of quantum regression in simple models were found in~\cite{Davide-1,Davide-2,Samaneh}.			
	
	Similarly, many other approaches to the problem of classicality of quantum systems can be found in the literature. For example, in~\cite{Nori-PRL} the quantum evolution of an open system represented by a dynamical map $\Lambda_{t,s}$ is said to be classical it can be simulated by an ensemble of Hamiltonians $\{p_k,H_k\}$ on $\hilb_{\rm S}$, with $\{p_k\}_k$ being a probability distribution, such that, for any system state $\rho_{t_0}$, one has
	\begin{equation}
		\Lambda_{t,t_0}(\rho_{t_0}) = \sum_k p_k \,U^k_{t,t_0} \rho_{t_0} \big(U^k_{t,t_0}\big)^\dag,
	\end{equation}
	with $U^k_{t,t_0}=\e^{-\i(t-t_0) H_k}$: in this case, the (non-unitary) evolution of the system can be reproduced via a purely classical averaging procedure over distinct unitary evolutions. The above representation is often applied for disordered quantum systems described by Hamiltonian ensembles~\cite{Clemens-1,Clemens-2,Clemens-3,Clemens-4}. Note that such a concept of classicality immediately implies that the dynamics has to be unital, i.e.~$\Lambda_{t,t_0}(\oper) = \oper$ for all $t \geq t_0$ (cf.~also~\cite{NO}). With this approach, in~\cite{Nori} the (non)classicality of pure dephasing processes is analyzed for qubit systems, and a corresponding nonclassicality measure is proposed.
	
	In this paper we shall investigate the multitime statistics for a quantum system undergoing a pure dephasing evolution, focusing on the existence of particular choices of the initial state $\rho_{t_0}$ and the measurement basis for which classicality (possibly up to a finite number of measurements) is achieved, thus effectively ``hiding'' its quantum nature.
	This problem was recently analyzed in~\cite{Lukasz}, albeit from a slightly different perspective: the authors consider the scenario in which the system, after each measurement, is reset to the initial state $\rho_{t_0}$. We shall return to this point in Section~\ref{sec:5}.
	Here we will first find nontrivial dephasing processes that are $2$-classical, and then focus on the Markovian scenario, where necessary and sufficient conditions for the classicality (at any order) of a dephasing process are found. Interestingly, excepting particular cases, classicality can be achieved when the dephasing basis and the measurement one are either fully compatible or fully incompatible---that is, in the latter case, \textit{mutually unbiased bases} (MUBs)~\cite{MUB1} (cf.~also the review~\cite{MUB2}), possibly hinting at a deeper relation between the two concepts.
	
	The paper is organized as follows. In Section~\ref{sec:2} we provide a brief introduction about dephasing processes, in particular recalling a simple characterization of Markovian dephasing processes. In Section~\ref{sec:3} we show that a system initially prepared in a state which is diagonal in the dephasing basis, when measured with a proper class of unbiased bases with respect to the dephasing one, is $2$-classical but generally not $3$-classical; an analogous result is then shown to hold, under some conditions on the environment, for an arbitrary measurement basis if the initial state is the maximally mixed one. In Section~\ref{sec:4} we finally focus on the Markovian scenario and find conditions under which, in this case, the system is in fact classical when measured with each of these unbiased bases. Final considerations are outlined in Section~\ref{sec:5}.
	
	\section{Dephasing-type systems}\label{sec:2}
	
	Reprising the notation of Section~\ref{sec:1}, we shall consider global Hamiltonians on $\hilb_{\rm S}\otimes\hilb_{\rm B}$ in the following form. Given a PVM $\{E_j\}_{j}\subset\hilb_{\rm S}$ on the Hilbert space of the system, define
	\begin{equation}\label{eq:deph_pvm}
		\textbf{H}=\sum_{j}E_j\otimes H_j,
	\end{equation}
	with $\{H_j\}_j$ being a family of self-adjoint operators on $\hilb_{\rm B}$, whence $\textbf{H}$ is self-adjoint itself. Suppose that, at the initial time $t_0$, the global state of the system and the environment is a product state $\rho_{t_0}\otimes\varrho_{\rm B}$, and define
	\begin{eqnarray}\label{eq:ujl}
		\mathcal{U}_{t,s}^{j,\ell}&=&U_{t,s}^{j}(\cdot)\left(U_{t,s}^{\ell}\right)^\dag,\\
		\mathcal{E}^{j,\ell}&=&E_j(\cdot)E_\ell,
	\end{eqnarray}
	with $U_{t,s}^{j}$ being the unitary propagator associated with $H_j$. The map $\Lambda_{t,t_0}$ describing the reduced dynamics of the system is then given by
	\begin{eqnarray}\label{eq:reduced}
		\Lambda_{t,t_0}(\rho_{t_0})&=&\sum_{j,\ell}\tr\left[\mathcal{U}^{j,\ell}_{t,t_0}(\varrho_{\rm B})\right]\mathcal{E}^{j,\ell}(\rho_{t_0}),
	\end{eqnarray}
	thus being dependent on a matrix-valued function which we shall refer to as the \textit{dephasing matrix} of the process, crucially depending on the interplay between the block Hamiltonians $\{H_j\}_j$ and the environment state $\varrho_{\rm B}$. This is essentially a slight generalization of a typical dephasing channel, since at this stage we do not assume the projectors $E_j$ to be rank-one.
	
	The multitime statistics describing repeated measurements with a given PVM $\{P_x\}_{x}$ can be readily computed: for every $n$, the joint probability distribution reads
	\begin{equation}\label{eq:multi_dephasing}
		\mathbb{P}_n(x_n,t_n;\ldots;x_1,t_1)=\sum_{j_n,l_n}\cdots\sum_{j_1,l_1}\tr\left[\mathcal{P}_{x_n}\mathcal{E}^{j_n,\ell_n}\cdots\mathcal{P}_{x_1}\mathcal{E}^{j_1,\ell_1}(\rho_{t_0})\right]\tr\left[\mathcal{U}^{j_n,\ell_n}_{t_n,t_{n-1}}\cdots\mathcal{U}^{j_1,\ell_1}_{t_1,t_0}(\varrho_{\rm B})\right],
	\end{equation}
	again with $\mathcal{P}_x=P_x(\cdot)P_x$. Again, each term of the sum decomposes into the product of a term containing all information about the preparation--measurement process, and another encoding all information about the environment. We may refer to the latter as the \textit{dephasing tensor}. In particular, by a direct check one obtains the following result:
	\begin{proposition}[\!\!{\cite{Davide-2}}]\label{prop:markov}
		The dephasing system satisfies the regression equality~\eqref{eq:quantumprob} if and only if
		\begin{equation}\label{eq:dephasing_regr}
			\tr\left[\mathcal{U}^{j_n,\ell_n}_{t_n,t_{n-1}}\cdots\mathcal{U}^{j_1,\ell_1}_{t_1,t_0}(\varrho_{\rm B})\right]=\prod_{k=1}^n\tr\left[\mathcal{U}^{j_k,\ell_k}_{t_k,t_{k-1}}(\varrho_{\rm B})\right],
		\end{equation}
	with $\mathcal{U}^{j,\ell}_{t,s}$ as in Eq. \eqref{eq:ujl}.
	\end{proposition}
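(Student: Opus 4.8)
The plan is to reduce the equivalence to a direct comparison between the general dephasing statistics~\eqref{eq:multi_dephasing} and the statistics predicted by the regression formula~\eqref{eq:quantumprob_regr}. First I would insert the explicit form~\eqref{eq:reduced} of each reduced map $\Lambda_{t_k,t_{k-1}}$ into the right-hand side of~\eqref{eq:quantumprob_regr}. Since each $\Lambda_{t_k,t_{k-1}}$ is a sum over a single index pair $(j_k,\ell_k)$ weighted by $\tr[\mathcal{U}^{j_k,\ell_k}_{t_k,t_{k-1}}(\varrho_{\rm B})]$, expanding the composition and collecting the weights produces a sum over the same index set $(j_1,\ell_1,\dots,j_n,\ell_n)$ as in~\eqref{eq:multi_dephasing}, carrying exactly the same preparation--measurement factor $\tr[\mathcal{P}_{x_n}\mathcal{E}^{j_n,\ell_n}\cdots\mathcal{P}_{x_1}\mathcal{E}^{j_1,\ell_1}(\rho_{t_0})]$, but with the environmental factor replaced by the product $\prod_{k=1}^n\tr[\mathcal{U}^{j_k,\ell_k}_{t_k,t_{k-1}}(\varrho_{\rm B})]$. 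Subtracting the two expressions, the regression formula is equivalent to the identity
\begin{equation*}
\sum_{j_1,\ell_1}\cdots\sum_{j_n,\ell_n}\tr\left[\mathcal{P}_{x_n}\mathcal{E}^{j_n,\ell_n}\cdots\mathcal{P}_{x_1}\mathcal{E}^{j_1,\ell_1}(\rho_{t_0})\right]\,\Delta_n^{\,j_1,\ell_1;\dots;j_n,\ell_n}=0 ,
\end{equation*}
where $\Delta_n^{\,j_1,\ell_1;\dots;j_n,\ell_n}$ denotes the difference between the two sides of~\eqref{eq:dephasing_regr}.

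The implication from~\eqref{eq:dephasing_regr} to the regression formula is then immediate: factorization makes every $\Delta_n$ vanish, so the displayed sum is zero term by term and the two statistics coincide. The converse is where the real work lies. Here I would exploit the fact that the regression formula is required to hold for \emph{every} measurement PVM $\{P_x\}_x$, every string of outcomes, and every initial state $\rho_{t_0}$, and argue that, under this freedom, the preparation--measurement factors---regarded as functionals of the pair $(\{P_x\}_x,\rho_{t_0})$ indexed by the tuple $(j_1,\ell_1,\dots,j_n,\ell_n)$---are linearly independent. Granting this, the vanishing of the above linear combination for all admissible choices forces each $\Delta_n^{\,j_1,\ell_1;\dots;j_n,\ell_n}=0$, which is precisely~\eqref{eq:dephasing_regr}.

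To establish the required linear independence I would reduce to rank-one probe data: take $\rho_{t_0}=\ketbra{\psi}{\psi}$ and, at each step, retain a single rank-one outcome $P_{x_k}=\ketbra{\phi_k}{\phi_k}$. The factor then collapses to a product of matrix elements $\langle\phi_k|E_{j_k}|\phi_{k-1}\rangle$ and $\langle\phi_{k-1}|E_{\ell_k}|\phi_k\rangle$ (with $\ket{\phi_0}=\ket{\psi}$), which are polynomials in the components of the freely chosen vectors $\ket{\psi},\ket{\phi_1},\dots$. Because the dephasing projectors $\{E_j\}_j$ are mutually orthogonal and hence linearly independent operators, for fixed neighbouring vectors the maps $E_j\mapsto\langle\phi_k|E_{j}|\phi_{k-1}\rangle$ separate distinct indices, and a product of such separating factors in independent variables cannot satisfy a nontrivial linear relation identically; this yields the independence. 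I expect this separation step to be the main obstacle: while the argument is transparent when the $E_j$ are rank-one, for higher-rank dephasing projectors one must verify that enough of the products $E_{j_{k+1}}P_{x_k}E_{j_k}$ remain nonvanishing to resolve every tuple, so that the degenerate blocks are handled with some care. Once linear independence is secured, the two directions combine to give the stated equivalence.
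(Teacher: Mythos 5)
The paper itself does not prove this proposition: it cites~\cite{Davide-2} for the rank-one case and asserts that the extension to general projectors $E_j$ is immediate, so there is no in-text argument to compare against. Your reconstruction is essentially the correct and expected one. The reduction to the identity $\sum_{(j,\ell)}\tr[\mathcal{P}_{x_n}\mathcal{E}^{j_n,\ell_n}\cdots\mathcal{P}_{x_1}\mathcal{E}^{j_1,\ell_1}(\rho_{t_0})]\,\Delta_n=0$ is right, the sufficiency direction is airtight, and you correctly identify that the converse only works because ``satisfies the regression equality'' quantifies over all preparation--measurement protocols, so that linear independence of the coefficients forces each $\Delta_n$ to vanish. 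The monomial-separation argument for rank-one data is sound: with $\rho_{t_0}=\ketbra{\psi}{\psi}$ and rank-one outcomes, distinct index tuples produce distinct monomials in the components of $\ket{\psi},\ket{\phi_1},\dots,\ket{\phi_n}$ and their conjugates, which gives the independence. The one step you leave hanging---higher-rank dephasing blocks---is not a genuine obstruction and closes in one line: choose each probe vector supported on a single unit vector from the range of each $E_j$ (one representative per block), so that $\braket{\phi_k|E_j|\phi_{k-1}}$ collapses to the same product of scalar components as in the rank-one case; this is exactly why the paper can call the extension ``immediate.'' With that remark added, your proof is complete and matches what the cited reference establishes.
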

	While this claim is proven in~\cite{Davide-2} for a standard dephasing channel (i.e., all projectors $E_j$ being rank-one), its extension to the general case is immediate. Eq.~\eqref{eq:dephasing_regr} means that, for dephasing processes, the validity of the regression formula---and hence Markovianity---reduces to a factorization property of the $n$-time dephasing tensor in terms of $2$-time tensors, i.e.~dephasing matrices. This condition is clearly independent of the particular preparation--measurement protocol: the Markovianity of any dephasing process only depends on the initial environment state $\varrho_{\rm B}$ and the Hamiltonian $\textbf{H}$, without any dependence on either the initial system state $\rho_{t_0}$ nor the particular choice of measurement PVM. In this sense, Prop.~\ref{prop:markov} provides a complete characterization of (non)Markovianity in dephasing processes. As we will see, the situation is much more involved for classicality, in which both the initial state $\rho_{t_0}$ and the measurement PVM play a fundamental role.
	
	The hierarchy of conditions~\eqref{eq:dephasing_regr} has two important consequences. First of all, it contains the following condition
	\begin{equation}
		\tr\left[\mathcal{U}^{j,\ell}_{t_2,t_0}(\varrho_{\rm B})\right]=\tr\left[\mathcal{U}^{j,\ell}_{t_2,t_1}(\varrho_{\rm B})\right]\tr\left[\mathcal{U}^{j,\ell}_{t_1,t_0}(\varrho_{\rm B})\right],
	\end{equation}
	which, for time-independent Hamiltonians, implies, for some $\epsilon_{j\ell}\in\mathbb{R}$ and $\gamma_{j\ell}\geq0$,
	\begin{equation}
		\tr\left[\mathcal{U}^{j,\ell}_{t,s}(\varrho_{\rm B})\right]=\e^{-\left(\i\epsilon_{j\ell}+\frac{1}{2}\gamma_{j\ell}\right)(t-s)},\qquad t\geq s\geq t_0,
	\end{equation}
	and clearly means that the reduced dynamics induced by $\textbf{H}$ must satisfy the semigroup property, whence $\Lambda_{t,s}$ must be a GKLS semigroup~\cite{GKS,L}. However, keep in mind that this is only a necessary condition: counterexamples of dephasing semigroups blatantly violating regression can be easily constructed. However, while very restrictive, this condition is satisfied in (at least) one case of physical interest: by choosing $\textbf{H}$ as the dephasing-type spin--boson model with flat form factor, and $\varrho_{\rm B}$ as the vacuum state of the boson field, the couple $(\textbf{H},\varrho_{\rm B})$ does indeed satisfy the full hierarchy~\eqref{eq:dephasing_regr}, as shown in~\cite{Davide-2}.
	
	Furthermore, Eq.~\eqref{eq:dephasing_regr} cannot hold in the \textit{commutative} case---that is, when the Hamiltonians $\{H_j\}_j$ form a commutative family\footnote{Since we are dealing with possibly unbounded operators, this is to be interpreted in the following sense: $\e^{-\i tH_j}\e^{-\i sH_\ell}=\e^{-\i sH_\ell}\e^{-\i tH_j}$ for all $j,\ell=0,\dots,d-1$ and all $t,s\in\mathbb{R}$}---unless the dephasing is trivial, that is, all elements of the dephasing matrix have unit modulus,
	\begin{equation}
		\left|\tr\left[\mathcal{U}^{j,\ell}_{t_2,t_1}(\varrho_{\rm B})\right]\right|^2=1,
	\end{equation}
	for all values of the parameters: the reduced dynamics induced on $\rho_{t_0}$ (cf.~Eq.~\eqref{eq:reduced}) only involves its off-diagonal elements acquiring a time-dependent phase term, their modulus being unchanged~\cite{Davide-2}.
	
	To conclude, for future convenience let us list here some elementary properties of the dephasing matrix and tensor.
	\begin{proposition}\label{prop:markommute}
		For every $n$ and every value of the indices, we have
		\begin{equation}
			\tr\left[\mathcal{U}^{j_n,j_n}_{t_n,t_{n-1}}\mathcal{U}^{j_{n-1},\ell_{n-1}}_{t_{n-1},t_{n-2}}\cdots\mathcal{U}^{j_1,\ell_1}_{t_1,t_0}(\varrho_{\rm B})\right]=\tr\left[\mathcal{U}^{j_{n-1},\ell_{n-1}}_{t_{n-1},t_{n-2}}\cdots\mathcal{U}^{j_1,\ell_1}_{t_1,t_0}(\varrho_{\rm B})\right];
		\end{equation}
		besides, in the case in which the dephasing is either Markovian or generated by commuting Hamiltonians, for every $k=1,\dots,n$ we have
		\begin{eqnarray}
			&&	\tr\left[\mathcal{U}^{j_n,\ell_n}_{t_n,t_{n-1}}\cdots\mathcal{U}^{j_{k+1},\ell_{k+1}}_{t_{k+1},t_{k}}\mathcal{U}^{j_{k},j_{k}}_{t_{k},t_{k-1}}\mathcal{U}^{j_{k-1},\ell_{k-1}}_{t_{k-1},t_{k-2}}\cdots\mathcal{U}^{j_1,\ell_1}_{t_1,t_0}(\varrho_{\rm B})\right]\nonumber\\&=&	\tr\left[\mathcal{U}^{j_n,\ell_n}_{t_n,t_{n-1}}\cdots\mathcal{U}^{j_{k+1},\ell_{k+1}}_{t_{k+1},t_{k}}\mathcal{U}^{j_{k-1},\ell_{k-1}}_{t_{k-1},t_{k-2}}\cdots\mathcal{U}^{j_1,\ell_1}_{t_1,t_0}(\varrho_{\rm B})\right].
		\end{eqnarray}
	\end{proposition}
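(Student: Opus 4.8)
The plan is to prove the two displayed identities in turn, splitting the second into the commuting and the Markovian cases, which rest on genuinely different mechanisms. I would obtain the first identity directly from unitarity and cyclicity of the trace: the outermost map $\mathcal{U}^{j_n,j_n}_{t_n,t_{n-1}}$ acts on bath operators as $Y\mapsto U Y U^\dag$ with $U=U^{j_n}_{t_n,t_{n-1}}$ unitary, so writing $Y=\mathcal{U}^{j_{n-1},\ell_{n-1}}_{t_{n-1},t_{n-2}}\cdots\mathcal{U}^{j_1,\ell_1}_{t_1,t_0}(\varrho_{\rm B})$ one has $\tr[U Y U^\dag]=\tr[U^\dag U Y]=\tr[Y]$, which is exactly the assertion. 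The special case $n=1$ records $\tr[\mathcal{U}^{j,j}_{t,s}(\varrho_{\rm B})]=1$, which I reuse below.

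For the second identity in the commuting case, the observation I would establish first is that the diagonal superoperator $\mathcal{U}^{j_k,j_k}_{t_k,t_{k-1}}$ commutes, as a map on $\mathcal{B}(\hilb_{\rm B})$, with every other factor $\mathcal{U}^{j_m,\ell_m}_{t_m,t_{m-1}}$. Expanding the two orders of composition, this reduces to $\e^{-\i\tau_k H_{j_k}}\e^{-\i\tau_m H_{j_m}}=\e^{-\i\tau_m H_{j_m}}\e^{-\i\tau_k H_{j_k}}$ on the ket side and the analogous relation on the bra side (with $\tau_p=t_p-t_{p-1}$), both of which hold under the commutativity hypothesis. I would then slide the diagonal factor outward past the $n-k$ maps at positions $k+1,\dots,n$ until it is the outermost map, whereupon the mechanism of the first identity — trace invariance under the outermost unitary conjugation — removes it and leaves exactly the right-hand side. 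Crucially, this argument never transports the diagonal unitaries across $\varrho_{\rm B}$, which would be illegitimate since the $H_j$ need not commute with the environment state; only commutativity among the Hamiltonians is used.

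For the Markovian case I would instead invoke the factorization property of Prop.~\ref{prop:markov}. The left-hand side is a composition over the consecutive instants $t_0<t_1<\dots<t_n$, so Eq.~\eqref{eq:dephasing_regr} factorizes it as $\prod_{m=1}^n\tr[\mathcal{U}^{j_m,\ell_m}_{t_m,t_{m-1}}(\varrho_{\rm B})]$, the $k$-th factor being $\tr[\mathcal{U}^{j_k,j_k}_{t_k,t_{k-1}}(\varrho_{\rm B})]=1$. What remains — and this is the one real obstacle — is to show that the right-hand side factorizes into the same product $\prod_{m\neq k}\tr[\mathcal{U}^{j_m,\ell_m}_{t_m,t_{m-1}}(\varrho_{\rm B})]$: it is a composition over the non-consecutive intervals $[t_0,t_1],\dots,[t_{k-2},t_{k-1}],[t_k,t_{k+1}],\dots,[t_{n-1},t_n]$, carrying a gap on $[t_{k-1},t_k]$, so Eq.~\eqref{eq:dephasing_regr} does not apply verbatim. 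I would close this gap by homogeneity: for the time-independent Hamiltonians that are our main concern, each map $\mathcal{U}^{j,\ell}_{t,s}$ depends only on the elapsed time $t-s$, so the gapped composition coincides with a genuinely consecutive one obtained by relabelling the instants to close the gap while preserving every interval length and every pair of block indices. Eq.~\eqref{eq:dephasing_regr} then applies to the relabelled composition, and since each two-point factor depends only on its interval length the resulting product is precisely $\prod_{m\neq k}\tr[\mathcal{U}^{j_m,\ell_m}_{t_m,t_{m-1}}(\varrho_{\rm B})]$, matching the left-hand side. For the boundary values $k=1$ and $k=n$ no gap occurs and Eq.~\eqref{eq:dephasing_regr} (respectively, the first identity) can be applied directly.
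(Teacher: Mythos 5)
Your proposal is correct. The paper states this proposition without proof, presenting it as an elementary property, and your argument for the first identity (trace invariance under the outermost unitary conjugation) and for the commuting case (the diagonal superoperator commutes with every other factor, since commutation is needed separately on the ket and bra sides and both follow from the hypothesis; then slide it outward and discard it) is exactly the intended mechanism. Your care in sliding the diagonal factor \emph{toward the trace} rather than toward $\varrho_{\rm B}$ is the right instinct, since nothing is assumed about $[H_j,\varrho_{\rm B}]$.

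The one place where you go beyond what the paper says is the Markovian case for $1<k<n$, and you are right that there is a genuine subtlety there: the right-hand side is a composition with a temporal gap on $[t_{k-1},t_k]$, so Eq.~\eqref{eq:dephasing_regr} does not apply to it verbatim, whereas it does apply to the left-hand side (yielding a product whose $k$-th factor is $1$). Your homogeneity argument closes this gap correctly in the time-independent setting, which is the paper's declared main focus; for genuinely time-dependent block Hamiltonians one would instead need to assume the factorization for gapped compositions as well, a point the paper glosses over. It is worth noting that every invocation of the Markovian case elsewhere in the paper is with $k=1$ (or reducible to $k=n$ via the first identity), where, as you observe, the gapped composition is itself a consecutive one starting at $t_1$ and the factorization applies directly, so the subtlety never bites in practice.
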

	That is: in the general case, all entries of the $n$-time dephasing tensor characterized by the $n$th couple of indices $(j_n,\ell_n)$ having equal values can be evaluated by substituting the unitary map $\mathcal{U}^{(j_n,j_n)}_{t_n,t_{n-1}}$ with the identity map; furthermore, if either Eq.~\eqref{eq:dephasing_regr} or the block Hamiltonians form a commuting family, the same holds for \textit{every} other pair $(j_k,\ell_k)$. As we will see, the latter property will greatly simplify the calculation of multitime probabilities in the particular cases we are interested in.
	
	\section{2-classical dephasing processes}\label{sec:3}	
	
	We shall now search for particular preparation--measurement protocols which yield a classical or $N$-classical process. This will ultimately depend on the interplay between the two PVMs that enter the process:
	\begin{itemize}
		\item the dephasing PVM $\{E_j\}_{j}$ in the definition~\eqref{eq:deph_pvm} of the global Hamiltonian;
		\item the measurement PVM $\{P_x\}_{x}$ with which the system is repeatedly measured,
	\end{itemize}
	but in general may---and will---also depend nontrivially on the properties of the environment.
	
	Clearly, if the two PVMs coincide, then the process is indeed classical---and shows a trivial dependence on time. Indeed, in this case we must set $\mathcal{P}_{x_n}=\mathcal{E}^{x_n,x_n}$ in Eq.~\eqref{eq:multi_dephasing}, whence
	\begin{equation}
		\mathbb{P}_n(x_n,t_n;\ldots;x_1,t_1)=\delta_{x_n,x_{n-1}}\cdots\delta_{x_2,x_1}\tr[\mathcal{P}_{x_1}(\rho_{t_0})],
	\end{equation}
	thus obtaining an elementary process, independent of all times $t_1,\dots,t_n$, which is clearly classical: this is an obvious consequence of the fact that the measurement do not interfere at all with the dephasing. More generally, the same happens if the two PVMs are commuting. Clearly, a genuinely time-dependent (and generally nonclassical) process can only be obtained when the two PVMs do \textit{not} commute, that is, when the measurement is not \textit{fully compatible} with the dephasing.
	
	We may wonder whether there exist preparation--measurement protocols which yield time-dependent dephasing processes exhibiting a classical behavior, that is, satisfying the consistency condition~\eqref{eq:consistency} possibly up to some finite order, despite the fact that the system is measured with some PVM not fully compatible with its ``natural" one. We shall start in this section by analyzing some of these configurations which ensure $2$-classicality, and then proceed in Section~\ref{sec:4} to investigate the Markovian case, where classicality at any order can be in fact achieved.	
	
	\subsection{2-classicality in the qubit case}
	
	We shall start from the case $d=2$, that is, a qubit undergoing pure dephasing in some orthonormal basis $\{\ket{\e_0},\ket{\e_1}\}$ under the action of a global Hamiltonian $\textbf{H}$ in the form
	\begin{equation}\label{eq:h_qubit}
		\textbf{H}=\ketbra{\e_0}{\e_0}\otimes H_0+\ketbra{\e_1}{\e_1}\otimes H_1.
	\end{equation}
	Let us consider a measurement basis $\{\ket{\m_0(\phi;\theta)},\ket{\m_1(\phi;\theta)}\}\subset\hilb_{\rm S}$ in the form
	\begin{eqnarray}\label{eq:genbasis}
		\ket{\m_0(\phi;\theta)}&=&\cos \theta\ket{\e_0}+\e^{\i \phi} \sin \theta\ket{\e_1},\\
		\label{eq:genbasis2}
		\ket{\m_1(\phi;\theta)}&=&\sin \theta\ket{\e_0}-\e^{\i \phi} \cos \theta\ket{\e_1}.
	\end{eqnarray}
	for some angles $\phi,\theta$. Up to global phases which do not affect the multitime statistics, the vectors~\eqref{eq:genbasis}--\eqref{eq:genbasis2} span all possible measurements on the qubit; up to a flip of the two basis vectors, it suffices to consider $0\leq\theta\leq\pi/2$: in particular, the value $\theta=0$ yields the dephasing basis, while $\theta=\pi/4$ yields (setting $\ket{m_x(\phi,\pi/4)}\equiv\ket{m_x(\phi)}$)
	\begin{eqnarray}\label{eq:mubs_qubit}
		\ket{\m_0(\phi)}&=&\frac{1}{\sqrt{2}}\left(\ket{\e_0}+\e^{\i\phi}\ket{\e_1}\right),\\
		\label{eq:mubs_qubit2}
		\ket{\m_1(\phi)}&=&\frac{1}{\sqrt{2}}\left(\ket{\e_0}-\e^{\i\phi}\ket{\e_1}\right) .
	\end{eqnarray}
	Importantly, for any value of $\phi$ we have $|\braket{\m_x(\phi)|\e_j}|^2=1/2$ for all $x,j=1,2$: that is, the basis~\eqref{eq:mubs_qubit} and the dephasing basis are mutually unbiased bases (MUBs)~\cite{MUB1,MUB2}. In particular, the choices $\phi=0$ and $\phi=\pi/2$ correspond respectively to the eigenvectors of the Pauli operators $\sigma_x$ and $\sigma_y$, two choices of obvious practical relevance.
	
	Choosing the initial state to be diagonal in the dephasing basis:
	\begin{equation}\label{rho-p}
		\rho_{t_0}=p\ketbra{\e_0}{\e_0}+(1-p)\ketbra{\e_1}{\e_1},\qquad 0\leq p\leq 1,
	\end{equation}
	a long but straightforward calculation allows one to determine the $1$-time and $2$-time probability distributions associated to this preparation--measurement protocol. For any choice of $\theta$ and $\phi$, one obtains
	\begin{eqnarray}\label{q21}
		&&	\sum_{x_1=0,1}\mathbb{P}_2(x_2,t_2;x_1,t_1)-\mathbb{P}_1(x_2,t_2)\nonumber\\&=&(-1)^{x_2}\frac{1}{8} \sin 2 \theta\, \sin 4 \theta
		\Big(p
		\tr\left[\mathcal{U}^{0,1}_{t_2,t_1}\mathcal{U}^{0,0}_{t_1,t_0}(\varrho_{\rm B})\right]
		+p
		\tr\left[\mathcal{U}^{1,0}_{t_2,t_1}\mathcal{U}^{0,0}_{t_1,t_0}(\varrho_{\rm B})\right]\nonumber\\
		&&
		+(p-1)
		\tr\left[\mathcal{U}^{0,1}_{t_2,t_1}\mathcal{U}^{1,1}_{t_1,t_0}(\varrho_{\rm B})\right]
		+(p-1)
		\tr\left[\mathcal{U}^{1,0}_{t_2,t_1}\mathcal{U}^{1,1}_{t_1,t_0}(\varrho_{\rm B})\right]
		-2(2p-1)\Big).
	\end{eqnarray}
	The above result, which does not depend on $\phi$, corresponds to the product of a time-independent term $\propto\sin2\theta\sin4\theta$ encoding all information about the measurement, times a time-dependent term depending on the choice of initial state and the properties of the environment. $2$-classicality is obtained, regardless the properties of the environment, in the cases $\theta=0$ (fully compatible measurement) and $\theta=\pi/4$ (fully incompatible measurement). The calculation above clearly shows that, in the qubit case, there are no other possibilities to obtain $2$-classicality without making further assumptions on the initial state of the system (the value of $p$) or on the properties of the environment.
	\begin{proposition}\label{prop:qubit1}
		Let $d=2$; consider an initial state $\rho_{t_0}$ diagonal in the basis $\{\ket{\e_j}\}_{j=0,1}$, and any measurement basis in the form~\eqref{eq:genbasis}--\eqref{eq:genbasis2}. Then the process is $2$-classical if one of the following conditions hold:
		\begin{itemize}
			\item $\theta=0$, that is, the measurement basis is fully compatible with the dephasing one;
			\item $\theta=\pi/4$, that is, the measurement basis is fully incompatible (unbiased) with the dephasing one;
		\end{itemize}
	\end{proposition}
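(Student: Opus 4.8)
The plan is to obtain the explicit form of the left-hand side of the $n=2$ consistency condition~\eqref{eq:consistency}, namely the difference $\sum_{x_1}\mathbb{P}_2(x_2,t_2;x_1,t_1)-\mathbb{P}_1(x_2,t_2)$ recorded in Eq.~\eqref{q21}, and then to read off exactly when it vanishes identically. First I would specialize the general multitime formula~\eqref{eq:multi_dephasing} to $d=2$, with the measurement projectors $P_{x}=\ketbra{\m_x(\phi;\theta)}{\m_x(\phi;\theta)}$ built from~\eqref{eq:genbasis}--\eqref{eq:genbasis2} and the diagonal initial state~\eqref{rho-p}. The crucial simplification is that, since $\rho_{t_0}$ is diagonal in the dephasing basis, the first superoperator $\mathcal{E}^{j_1,\ell_1}$ forces $j_1=\ell_1$: one has $\mathcal{E}^{j_1,\ell_1}(\rho_{t_0})=\delta_{j_1,\ell_1}\,\rho_{j_1}\ketbra{\e_{j_1}}{\e_{j_1}}$ with $\rho_0=p$, $\rho_1=1-p$, so the first slot of every dephasing tensor is the diagonal map $\mathcal{U}^{j_1,j_1}_{t_1,t_0}$.

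For the one-time marginal this collapses everything: the environment factor $\tr[\mathcal{U}^{j,j}_{t_2,t_0}(\varrho_{\rm B})]=1$ by trace preservation, leaving the time-independent $\mathbb{P}_1(x_2,t_2)=\sum_{j}\rho_j|\braket{\m_{x_2}|\e_j}|^2$. For the two-time marginal I would carry out the remaining finite sum over $x_1$ explicitly; the surviving structure is the combination $\sum_{x_1}(P_{x_1})_{j_1 j_1}(P_{x_1})_{j_2 \ell_2}$ of matrix elements in the dephasing basis, which is precisely what carries the $\theta$-dependence. The diagonal contributions $j_2=\ell_2$ are then flattened by the first identity of Proposition~\ref{prop:markommute} (replacing $\mathcal{U}^{j_2,j_2}_{t_2,t_1}$ by the identity), so they become time-independent and cancel against the corresponding pieces of $\mathbb{P}_1$; what remains are the genuinely time-dependent off-diagonal terms $j_2\neq\ell_2$, i.e.\ the four traces $\tr[\mathcal{U}^{0,1}_{t_2,t_1}\mathcal{U}^{j_1,j_1}_{t_1,t_0}(\varrho_{\rm B})]$ and $\tr[\mathcal{U}^{1,0}_{t_2,t_1}\mathcal{U}^{j_1,j_1}_{t_1,t_0}(\varrho_{\rm B})]$ appearing in~\eqref{q21}.

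Collecting the overlap coefficients $\braket{\m_x|\e_j}$ should reproduce the overall prefactor $\tfrac18\sin2\theta\,\sin4\theta$ of Eq.~\eqref{q21}; in particular the phases $\e^{\pm\i\phi}$ must cancel, explaining the $\phi$-independence. With~\eqref{q21} in hand the proposition is immediate: $2$-classicality requires this difference to vanish for both outcomes $x_2$, for all $t_2\geq t_1\geq t_0$, and for every admissible environment state $\varrho_{\rm B}$. The time-dependent bracket is generically nonzero and carries no $\theta$-dependence, so the conditions guaranteeing its vanishing \emph{independently} of $\varrho_{\rm B}$ are exactly those killing the prefactor, $\sin2\theta\,\sin4\theta=0$. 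On the reduced range $0\leq\theta\leq\pi/2$ this occurs only at $\theta=0$, $\theta=\pi/4$ and $\theta=\pi/2$; since $\theta=\pi/2$ merely relabels the dephasing basis (it coincides with $\theta=0$ up to a swap of the two measurement outcomes), the two genuinely distinct cases are the fully compatible $\theta=0$ and the mutually unbiased $\theta=\pi/4$, establishing the sufficiency claimed.

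The step I expect to be the main obstacle is the derivation of~\eqref{q21} itself---the ``long but straightforward'' part: correctly evaluating $\sum_{x_1}(P_{x_1})_{j_1 j_1}(P_{x_1})_{j_2 \ell_2}$, combining it with the $x_2$ overlaps, and verifying that all $\phi$-dependence and all diagonal ($j_2=\ell_2$) environment contributions drop out to leave precisely the clean factor $\sin2\theta\,\sin4\theta$. Once this algebra is carried out, the logical passage from~\eqref{q21} to the proposition is essentially trivial.
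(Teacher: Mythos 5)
Your proposal follows essentially the same route as the paper: derive the explicit marginalization defect~\eqref{q21}, observe that it factorizes into the environment-independent prefactor $\tfrac18\sin2\theta\sin4\theta$ times a bracket, and conclude sufficiency of $\theta=0$ and $\theta=\pi/4$ from the vanishing of that prefactor. The only minor imprecision is your remark that the diagonal $j_2=\ell_2$ contributions ``cancel against'' $\mathbb{P}_1$ --- they do not cancel completely but leave the residual constant $-2(2p-1)$ inside the bracket of~\eqref{q21}; since that constant is still multiplied by the same $\theta$-prefactor, your conclusion is unaffected.
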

	In all other cases, the process will fail the $2$-classicality test: two measurements will be enough to reveal the nonclassicality of the process. In particular, for any fixed choice of the initial state, the deviations from classicality will be maximal when $\sin2\theta\sin4\theta$ is maximal: this happens for
	\begin{equation}
		\theta_{\rm max}^{(1)}=\frac{1}{2}\arctan\sqrt{2},\qquad\theta_{\rm max}^{(2)}=\frac{1}{2}\left(\pi-\arctan\sqrt{2}\right).
	\end{equation}
	
	We may wonder whether the cases listed in Prop.~\ref{prop:qubit1} are the \textit{only} possible ones in which $2$-classicality is achieved. An (almost) affirmative answer to this question can be obtained by making one of the following assumptions on the environment. Suppose that either the process is Markovian, or the Hamiltonians $H_0,H_1$ generating the environment dynamics, cf.~Eq.~\eqref{eq:h_qubit}, commute. In both cases, setting for brevity
	\begin{equation}
		\varphi_{t,s}:=\tr\,\mathcal{U}^{0,1}_{t,s}(\varrho_{\rm B}),
	\end{equation}
	by using Prop.~\ref{prop:markommute} Eq.~\eqref{q21} simplifies as such:
	\begin{equation}\label{q21bis}
		\sum_{x_1=0,1}\mathbb{P}_2(x_2,t_2;x_1,t_1)-\mathbb{P}_1(x_2,t_2)=(-1)^{x_2}\left(\frac{1}{2}-p\right) \sin 2 \theta\, \sin 4 \theta \,(1-\Re\varphi_{t_2,t_1}).
	\end{equation}
	Excluding the trivial case $\varphi_{t_2,t_1}\equiv1$, this quantity vanishes identically if either $\theta=0$, $\theta=\pi/4$, or $p=1/2$: interestingly, $2$-classicality is also achieved for a special case of initial state---the maximally mixed state---regardless the measurement procedure.
	\begin{proposition}\label{prop:qubit2}
		Let $d=2$; consider an initial state $\rho_{t_0}$ diagonal in the basis $\{\ket{\e_j}\}_{j=0,1}$, and any measurement basis in the form~\eqref{eq:genbasis}--\eqref{eq:genbasis2}. Suppose that either the dephasing process is Markovian or the Hamiltonians $H_0,H_1$ commute. Then the process is $2$-classical if and only if one of the following conditions holds:
		\begin{itemize}
			\item $\theta=0$, that is, the measurement basis is fully compatible with the dephasing one;
			\item $\theta=\pi/4$, that is, the measurement basis is fully incompatible (unbiased) with the dephasing one;
			\item $p=1/2$, that is, the system is prepared in the maximally mixed state.
		\end{itemize}
	\end{proposition}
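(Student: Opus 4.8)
The plan is to read off $2$-classicality directly from the single-difference quantity whose general (unsimplified) form is displayed in Eq.~\eqref{q21}. By definition, the process is $2$-classical exactly when the consistency condition~\eqref{eq:consistency} holds at order $n=2$, i.e.~when $\sum_{x_1}\mathbb{P}_2(x_2,t_2;x_1,t_1)-\mathbb{P}_1(x_2,t_2)=0$ for every outcome $x_2$ and all $t_2\geq t_1\geq t_0$. The first and central step is to simplify the four dephasing-tensor terms of Eq.~\eqref{q21} under the standing hypothesis, and this is precisely where Prop.~\ref{prop:markommute} enters: its second part, valid whenever the process is Markovian \emph{or} the block Hamiltonians commute, lets one replace each inner map carrying a repeated index---namely $\mathcal{U}^{0,0}_{t_1,t_0}$ and $\mathcal{U}^{1,1}_{t_1,t_0}$---by the identity. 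Each of the four traces then collapses either to $\tr\,\mathcal{U}^{0,1}_{t_2,t_1}(\varrho_{\rm B})=\varphi_{t_2,t_1}$ or, since $\overline{\tr\,A}=\tr\,A^\dag$ and $\varrho_{\rm B}$ is self-adjoint, to its conjugate $\tr\,\mathcal{U}^{1,0}_{t_2,t_1}(\varrho_{\rm B})=\overline{\varphi_{t_2,t_1}}$.

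Substituting these back and collecting the $p$-dependent coefficients, the four surviving terms combine as $(2p-1)(\varphi_{t_2,t_1}+\overline{\varphi_{t_2,t_1}})=2(2p-1)\Re\varphi_{t_2,t_1}$, which together with the remaining constant $-2(2p-1)$ factors into $-2(2p-1)\big(1-\Re\varphi_{t_2,t_1}\big)$. Multiplying by the prefactor $(-1)^{x_2}\tfrac18\sin2\theta\,\sin4\theta$ yields the compact product form~\eqref{q21bis}, in which the difference appears as a product of essentially independent factors: the sign $(-1)^{x_2}$, the initial-state factor $\tfrac12-p$, the purely geometric measurement factor $\sin2\theta\,\sin4\theta$, and the environmental time factor $1-\Re\varphi_{t_2,t_1}$.

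It then remains to determine when this product vanishes for all admissible $x_2,t_1,t_2$. The ``if'' direction is immediate, as each of the three listed conditions annihilates one of the constant factors, so the whole expression vanishes identically. For the ``only if'' direction I would first note that $|\varphi_{t,s}|\leq1$, since $\varphi_{t,s}$ is an overlap of the form $\tr\,U_{t,s}^0\varrho_{\rm B}(U_{t,s}^1)^\dag$; hence $1-\Re\varphi_{t_2,t_1}\geq0$, and this factor vanishes for all times if and only if $\varphi_{t_2,t_1}\equiv1$, precisely the trivial dephasing excluded in the statement. Away from that case there is at least one pair of times with $1-\Re\varphi_{t_2,t_1}>0$; evaluating~\eqref{q21bis} there and demanding that it vanish forces $\big(\tfrac12-p\big)\sin2\theta\,\sin4\theta=0$. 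On $0\leq\theta\leq\pi/2$ one has $\sin2\theta\,\sin4\theta=0$ exactly for $\theta\in\{0,\pi/4,\pi/2\}$, the endpoints $0$ and $\pi/2$ corresponding (up to relabeling the basis vectors) to the fully compatible measurement; together with $p=1/2$ this reproduces the three cases of the proposition.

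The main obstacle is entirely in the first step: one must apply Prop.~\ref{prop:markommute} correctly and keep careful track of which tensor entries reduce to $\varphi_{t_2,t_1}$ and which to its conjugate, since only their surviving combination produces $\Re\varphi_{t_2,t_1}$ and hence the clean factorization~\eqref{q21bis}. Once that form is in hand the rest is a short independence-of-factors argument, the only conceptual point being the elementary observation that a number in the closed unit disk with real part $1$ must itself equal $1$, which is what turns ``$1-\Re\varphi\equiv0$'' into the trivial-dephasing hypothesis rather than a genuine fourth possibility.
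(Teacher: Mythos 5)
Your proposal is correct and follows essentially the same route as the paper: it reduces Eq.~\eqref{q21} to the factorized form~\eqref{q21bis} via the second part of Prop.~\ref{prop:markommute} and then reads off the vanishing conditions from the independence of the factors $(-1)^{x_2}$, $(\tfrac12-p)$, $\sin2\theta\sin4\theta$ and $1-\Re\varphi_{t_2,t_1}$. Your treatment of the ``only if'' direction (using $|\varphi_{t,s}|\le1$ to show that $1-\Re\varphi_{t_2,t_1}$ can only vanish identically in the trivial-dephasing case, which the paper excludes only in the surrounding text) is in fact slightly more explicit than the paper's own one-line justification.
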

	These results can be summarized as follows. Excluding the elementary case of a fully compatible measurement, preparation--measurement protocols yielding a genuinely time-dependent and $2$-classical dephasing process \textit{do} exist: one prepares the system in a state $\rho_{t_0}$ which is diagonal in the dephasing basis, and measures it in any of the unbiased bases~\eqref{eq:mubs_qubit}, thus obtaining a $2$-classical process regardless the properties of the environment. If the process is Markovian or the operators $\{H_0,H_1\}$ commute, this is indeed the only way to achieve a $2$-classical time-dependent process from an initial diagonal state---excepting the peculiar case in which the latter is maximally mixed. These results have a clear physical explanation: any diagonal state is left invariant by the dephasing dynamics, whence it evolves trivially, and is then mapped in the maximally mixed one when measured in any basis which is unbiased with respect to the dephasing one. No deviation from classicality will be thus observed in both cases.
	
	This discussion clearly shows the distinguished role played by MUBs in the classicality of qubit dephasing processes; in fact, we will now see that similar properties hold in the general case $d>2$.
	
	\subsection{2-classicality vs. MUBs}\label{subsec:mubs}
	
	Going beyond the qubit scenario, let us now consider a system of arbitrary dimension $d$. Suppose $E_j=\ketbra{\e_j}{\e_j}$ for some orthonormal basis $\{\ket{\e_j}\}_{j=0,\dots,d-1}\subset\hilb_{\rm S}$. We shall consider dephasing processes corresponding to a measurement basis chosen as follows: given an arbitrary vector of phases	
	\begin{equation}
		\boldsymbol\phi = (\phi_0,\dots,\phi_{d-1}),
	\end{equation}
	let us define 	
	\begin{equation}\label{eq:mubs}
		\ket{\m_x( \boldsymbol\phi)}=\frac{1}{d^{1/2}}\sum_{j=0}^{d-1}\omega^{jx}\e^{\i\phi_j}\ket{\e_j},\qquad\omega=\e^{2\pi\i/d}.
	\end{equation}
	For $\boldsymbol\phi=(0,\ldots,0)$, this is the discrete Fourier transform of the dephasing basis. Of course, since the vectors are uniquely associated to their projectors only up to a global phase, we have in fact $d-1$ free parameters (we may e.g.~set $\phi_0=0$). This expression generalizes Eqs.~\eqref{eq:mubs_qubit}--\eqref{eq:mubs_qubit2}; indeed, for any choices of the phases, $\{\ket{\e_j}\}_{j=0,\dots,d-1}$ and $\{\ket{\m_x( \boldsymbol\phi)}\}_{x=0,\dots,d-1}\subset\hilb_{\rm S}$ are MUBs:
	\begin{equation}
		|\!\braket{\m_x( \boldsymbol\phi)|\e_j}\!|^2=\frac{1}{d},\qquad j,x=0,\dots,d-1.
	\end{equation}
	By Eq.~\eqref{eq:multi_dephasing}, one easily computes the joint probability distributions associated with this protocol. 	For simplicity, hereafter we shall often abuse the notation by setting
	\begin{equation}
		\ket{\m_x( \boldsymbol\phi)}\equiv\ket{x},\qquad\ket{\e_j}\equiv\ket{j}.
	\end{equation}	
	In the cases $n=1,2$,
	\begin{eqnarray}\label{eq:prob1}
		\mathbb{P}_1(x_1,t_1)&=&\sum_{j_1,\ell_1}\tr\left[P_{x_1}E_{j_1}\rho_{t_0}E_{\ell_1}P_{x_1}\right]\tr\left[\mathcal{U}^{j_1,\ell_1}_{t_1,t_0}(\varrho_{\rm B})\right]\nonumber\\
		&=&\frac{1}{d}\sum_{j_1,\ell_1}\omega^{-(j_1-\ell_1)x_1}\braket{j_1|\rho_{t_0}|\ell_1}\e^{-\i(\phi_{j_1}-\phi_{\ell_1})}\tr\left[\mathcal{U}^{j_1,\ell_1}_{t_1,t_0}(\varrho_{\rm B})\right] ,
	\end{eqnarray}
	\begin{eqnarray}\label{eq:prob2}
		\mathbb{P}_2(x_2,t_2;x_1,t_1)&=&\sum_{j_2,\ell_2}\sum_{j_1,\ell_1}\tr\left[P_{x_2}E_{j_2}P_{x_1}E_{j_1}\rho_{t_0}E_{\ell_1}P_{x_1}E_{\ell_2}P_{x_2}\right]\tr\left[\mathcal{U}^{j_2,\ell_2}_{t_2,t_1}\mathcal{U}^{j_1,\ell_1}_{t_1,t_0}(\varrho_{\rm B})\right]\nonumber\\
		&=&\frac{1}{d^3}\sum_{j_2,\ell_2}\sum_{j_1,\ell_1}
		\omega^{-(j_2-\ell_2)x_2}	
		\omega^{[(j_2-\ell_2)-(j_1-\ell_1)]x_1}\braket{j_1|\rho_{t_0}|\ell_1}\nonumber\\&&\times\e^{-\i(\phi_{j_1}-\phi_{\ell_1})}\tr\left[\mathcal{U}^{j_2,\ell_2}_{t_2,t_1}\mathcal{U}^{j_1,\ell_1}_{t_1,t_0}(\varrho_{\rm B})\right],\nonumber\\
	\end{eqnarray}
	this structure being immediately generalizable to $n$-time probabilities, for example\small
	\begin{eqnarray}
		\mathbb{P}_3(x_3,t_3;x_2,t_2;x_1,t_1)&=&\frac{1}{d^5}\sum_{j_3,\ell_3}\sum_{j_2,\ell_2}\sum_{j_1,\ell_1}
		\omega^{-(j_3-\ell_3)x_3}\omega^{[(j_3-\ell_3)-(j_2-\ell_2)]x_2}	
		\omega^{[(j_2-\ell_2)-(j_1-\ell_1)]x_1}\braket{j_1|\rho_{t_0}|\ell_1}\nonumber\\&&\times\,\e^{-\i(\phi_{j_1}-\phi_{\ell_1})}\tr\left[\mathcal{U}^{j_3,\ell_3}_{t_3,t_2}\mathcal{U}^{j_2,\ell_2}_{t_2,t_1}\mathcal{U}^{j_1,\ell_1}_{t_1,t_0}(\varrho_{\rm B})\right] .
	\end{eqnarray}\normalsize
	Interestingly, in all cases only the sums on the indices $j_1$ and $\ell_1$ carry a dependence on the phases $\phi_0,\dots,\phi_{d-1}$ which appear in the definition~\eqref{eq:mubs} of the measurement basis.
	
	Now, at the level of $2$-time probability distributions, the only nontrivial consistency condition is the following:
	\begin{equation}
		\sum_{x_1}\mathbb{P}_2(x_2,t_2;x_1,t_1)=\mathbb{P}_1(x_2,t_2),\qquad t_2\geq t_1\geq t_0,
	\end{equation}
	since marginalizing over the last outcome always yields the $1$-time probability. On the other hand, by Eq.~\eqref{eq:prob2}, summing over $x_1$ we have\small
	\begin{eqnarray}\label{eq:prob2_marginalized}
		&&\sum_{x_1=0}^{d-1}\mathbb{P}_2(x_2,t_2;x_1,t_1)\nonumber\\&=&\frac{1}{d^3}\!\sum_{j_2,\ell_2}\sum_{j_1,\ell_1}
		\omega^{-(j_2-\ell_2)x_2}\!\!
		\left(\sum_{x_1=0}^{d-1}\omega^{[(j_2-\ell_2)-(j_1-\ell_1)]x_1}\right)\!\braket{j_1|\rho_{t_0}|\ell_1}\e^{-\i(\phi_{j_1}-\phi_{\ell_1})}\!\tr\!\left[\mathcal{U}^{j_2,\ell_2}_{t_2,t_1}\mathcal{U}^{j_1,\ell_1}_{t_1,t_0}(\varrho_{\rm B})\right],\nonumber\\
	\end{eqnarray}\normalsize
	and, in general, the quantities in Eqs.~\eqref{eq:prob1} and~\eqref{eq:prob2_marginalized} do not coincide. However, let us consider again the particular case in which the initial state $\rho_{t_0}$ is \textit{diagonal} in the dephasing basis. We will prove the following claim.
	
	\begin{proposition}\label{prop:2classicality_mubs}
		Consider an initial state $\rho_{t_0}$ diagonal in the basis $\{\ket{\e_j}\}_{j=0,\dots,d-1}$, and, given any family of phases $\phi_0,\dots,\phi_{d-1}$, choose $\{\ket{\m_x( \boldsymbol\phi)}\}_{x=0,\dots,d-1}$ as in Eq.~\eqref{eq:mubs} as measurement basis. Then the process is $2$-classical.
	\end{proposition}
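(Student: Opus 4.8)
The plan is to establish $2$-classicality by verifying the single nontrivial consistency condition $\sum_{x_1}\mathbb{P}_2(x_2,t_2;x_1,t_1)=\mathbb{P}_1(x_2,t_2)$, directly specializing the explicit expressions~\eqref{eq:prob1} and~\eqref{eq:prob2} to a diagonal initial state and checking that both sides collapse to the uniform value $1/d$. Writing $\rho_{t_0}=\sum_j p_j\ketbra{\e_j}{\e_j}$ with $\sum_j p_j=1$, the matrix elements $\braket{j_1|\rho_{t_0}|\ell_1}=p_{j_1}\delta_{j_1,\ell_1}$ collapse the $(j_1,\ell_1)$ double sums to a single index $j_1=\ell_1$; this simultaneously sets $\e^{-\i(\phi_{j_1}-\phi_{\ell_1})}=1$, so that the phase vector $\boldsymbol\phi$ defining the measurement basis~\eqref{eq:mubs} drops out entirely. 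In particular $\mathbb{P}_1(x_1,t_1)=\frac{1}{d}\sum_{j_1}p_{j_1}\tr[\mathcal{U}^{j_1,j_1}_{t_1,t_0}(\varrho_{\rm B})]$; since every $\mathcal{U}^{j,j}_{t,s}$ is a unitary conjugation and hence trace-preserving, each such trace equals one and $\mathbb{P}_1(x_2,t_2)=\frac{1}{d}\sum_{j_1}p_{j_1}=\frac{1}{d}$, independently of both $x_2$ and $t_2$.

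For the marginal of the two-time distribution I would impose $j_1=\ell_1$ in~\eqref{eq:prob2} and observe that the only surviving $x_1$-dependence is the factor $\omega^{(j_2-\ell_2)x_1}$. The character orthogonality relation $\sum_{x_1=0}^{d-1}\omega^{(j_2-\ell_2)x_1}=d\,\delta_{j_2,\ell_2}$ then enforces $j_2=\ell_2$ upon summation, leaving $\frac{1}{d^2}\sum_{j_2}\sum_{j_1}p_{j_1}\tr[\mathcal{U}^{j_2,j_2}_{t_2,t_1}\mathcal{U}^{j_1,j_1}_{t_1,t_0}(\varrho_{\rm B})]$. Applying the first identity of Proposition~\ref{prop:markommute}---equivalently, again invoking trace-preservation of the unitary conjugation $\mathcal{U}^{j_2,j_2}_{t_2,t_1}$---each surviving trace reduces to $\tr[\varrho_{\rm B}]=1$, so the marginal equals $\frac{1}{d^2}\cdot d\cdot\sum_{j_1}p_{j_1}=\frac{1}{d}$. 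Both sides thus coincide and the consistency condition holds for every $t_2\geq t_1\geq t_0$.

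The computation itself is mechanical, so the interesting point is not a technical obstacle but the conceptual crux I would want to state cleanly: the result rests on the coincidence of \emph{two} distinct ``diagonalizing'' mechanisms. Diagonality of $\rho_{t_0}$ forces $j_1=\ell_1$ \emph{before} any marginalization, which is what removes all dependence on $\boldsymbol\phi$ and explains why $2$-classicality holds for \emph{every} choice of MUB; the Fourier-type structure~\eqref{eq:mubs} of the measurement basis is what produces the character sum forcing $j_2=\ell_2$ \emph{after} marginalization. It is only because these two conditions together yield $j_1=\ell_1$ and $j_2=\ell_2$ that every environment-dependent dephasing factor trivializes, giving $2$-classicality without any Markovianity assumption and with no residual dependence on $\varrho_{\rm B}$. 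I would stress that this is special to the second order: the argument exploits that marginalizing the single intermediate outcome trivializes the lone intervening propagator, and I would expect it to fail at third order, where marginalizing a middle outcome only forces $j_3-\ell_3=j_2-\ell_2$ rather than $j_2=\ell_2$, so that an interior tensor $\tr[\mathcal{U}^{j_3,\ell_3}_{t_3,t_2}\mathcal{U}^{j_2,\ell_2}_{t_2,t_1}\mathcal{U}^{j_1,j_1}_{t_1,t_0}(\varrho_{\rm B})]$ with $j_2\neq\ell_2$ generically survives and obstructs $3$-classicality.
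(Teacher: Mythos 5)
Your proof is correct and follows essentially the same route as the paper's: diagonality of $\rho_{t_0}$ forces $j_1=\ell_1$ (killing the $\boldsymbol\phi$-dependence), the character sum over $x_1$ forces $j_2=\ell_2$, and trace preservation of the unitary conjugations collapses both sides to $1/d$. Your closing remark on why the mechanism breaks at third order also matches the paper's own discussion following the proposition.
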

	\begin{proof}
		Recalling that $\tr\rho_{t_0}=1$, and again using the shorthand $\ket{\e_j}\equiv\ket{j}$, we have
		\begin{eqnarray}
			\mathbb{P}_1(x_1,t_1)&=&\frac{1}{d}\sum_{j_1}\braket{j_1|\rho_{t_0}|j_1}=\frac{1}{d},
		\end{eqnarray}
		this quantity being independent of time---indeed, independent of any parameter of the process but the dimension of the Hilbert space: the evolution is trivial because of the initial state being diagonal in the dephasing basis, and all outcomes are equally probable at any time due to the fact that the measurement and the dephasing basis are MUBs. Besides, by Eq.~\eqref{eq:prob2_marginalized} one finds
		\begin{eqnarray}
			&&\sum_{x_1=0}^{d-1}\mathbb{P}_2(x_2,t_2;x_1,t_1)\nonumber\\
			&=&\frac{1}{d^2}\sum_{j_2,\ell_2}\sum_{j_1}
			\omega^{-(j_2-\ell_2)x_2}	
			\left(\sum_{x_1=0}^{d-1}\omega^{(j_2-\ell_2)x_1}\right)\braket{j_1|\rho_{t_0}|j_1}\tr\left[\mathcal{U}^{j_2,\ell_2}_{t_2,t_1}\mathcal{U}^{j_1,j_1}_{t_1,t_0}(\varrho_{\rm B})\right]\nonumber\\
			&=&\frac{1}{d^2}\sum_{j_2,\ell_2}\sum_{j_1}
			\omega^{-(j_2-\ell_2)x_2}	
			\delta_{j_2-\ell_2,0}\braket{j_1|\rho_{t_0}|j_1}\tr\left[\mathcal{U}^{j_2,\ell_2}_{t_2,t_1}\mathcal{U}^{j_1,j_1}_{t_1,t_0}(\varrho_{\rm B})\right]\nonumber\\
			&=&\frac{1}{d^2}\sum_{j_2}\sum_{j_1}\braket{j_1|\rho_{t_0}|j_1} =\frac{1}{d} ,
		\end{eqnarray}
		where we have used the fact that $\sum_{x_1}\omega^{(j_2-\ell_2)x_1}$ is either equal to $d$ if $j_2=\ell_2\!\!\mod d$, or $0$ otherwise; but, since both $j_2$ and $\ell_2$ range between $1$ and $d$, this only happens if $j_2=\ell_2$.
	\end{proof}
	This result generalizes Prop.~\ref{prop:qubit1} to the qu$d$it case: $2$-classicality holds for any preparation--measurement protocol involving a diagonal state $\rho_{t_0}$ in the dephasing basis, and a measurement with respect to any of the unbiased bases~\eqref{eq:mubs}. Like in the qubit case, it holds independently of the choice of the phases in the definition~\eqref{eq:mubs} of the unbiased basis with which the measurement is performed: in general, the fact that $\rho_{t_0}$ is diagonal in the dephasing basis erases all information about said phases from the full multitime statistics of the process. Besides, it holds regardless of the choice of the Hamiltonians $H_0,\dots,H_{d-1}$ as well as the environment state $\varrho_{\rm B}$. It may be conjectured that this is the only measurement choice yielding a genuinely time-dependent $2$-classical process---excepting the particular case in which $\rho_{t_0}$ is maximally mixed, cf.~the discussion in the next subsection.
	
	Finally, notice that in general the results of Prop.~\ref{prop:2classicality_mubs} (as well as its qubit version) cannot be improved from $2$-classicality to higher orders: a direct check shows indeed that, without further assumptions, the process is not $3$-classical. We have a full family of quantum processes which are $2$-classical but generally not classical: the quantum nature of the system will only emerge when taking into account the $3$-time probability distribution, i.e., by performing three measurements on the system. By doing so, the particular features of the environment will play a fundamental role, crucially determining the statistics of the system and generally disrupting classicality. In this sense, we may say that the process exhibits, in general, a hidden non-classicality, only accessible with a sufficient number of measurements.
	
	\subsection{2-classicality for maximally mixed initial state}\label{subsec:maxmix}
	
	In the qubit scenario, $2$-classicality was shown to hold, other than by choosing the dephasing and measurement bases to be fully compatible or incompatible, also by simply preparing the system in the maximally mixed state. As it turns out, this property has, again, a simple extension to the qu$d$it case.
	
	We shall consider the general setting in which both the dephasing and the measurement PVMs, $\{E_j\}_{j}$ and $\{P_x\}_x$, are arbitrary, and choose the initial state of the system as the maximally mixed one, i.e.~$\rho_{t_0}= \oper/d$. In this case,
	\begin{eqnarray}\label{eq:p1_maxmix}
		\mathbb{P}_1(x_1,t_1)&=&\frac{1}{d}\sum_{j_1,\ell_1}\tr\left[P_{x_1}E_{j_1}E_{\ell_1}P_{x_1}\right]\tr\left[\mathcal{U}^{j_1,\ell_1}_{t_1,t_0}(\varrho_{\rm B})\right]\nonumber\\
		&=&\frac{1}{d}\sum_{j_1}\tr\left[P_{x_1}E_{j_1}P_{x_1}\right] = \frac{1}{d}\tr[P_{x_1}],
	\end{eqnarray}
	where we used the properties $E_jE_\ell=\delta_{j\ell}E_j$ and $\sum_{j_1}E_{j_1}=\oper$. Again, the $1$-time probability is time-independent: the probability of each outcome is simply equal to the ratio between the rank of the projector and the dimension of the system space. Similarly,
	\begin{eqnarray}\label{eq:p2_maxmix}
		\mathbb{P}_2(x_2,t_2;x_1,t_1)&=&\frac{1}{d}\sum_{j_2,\ell_2}\sum_{j_1,\ell_1}\tr\left[P_{x_2}E_{j_2}P_{x_1}E_{j_1}E_{\ell_1}P_{x_1}E_{\ell_2}P_{x_2}\right]\tr\left[\mathcal{U}^{j_2,\ell_2}_{t_2,t_1}\mathcal{U}^{j_1,\ell_1}_{t_1,t_0}(\varrho_{\rm B})\right]\nonumber\\
		&=&\frac{1}{d}\sum_{j_2,\ell_2}\sum_{j_1}\tr\left[P_{x_2}E_{j_2}P_{x_1}E_{j_1}P_{x_1}E_{\ell_2}P_{x_2}\right]\tr\left[\mathcal{U}^{j_2,\ell_2}_{t_2,t_1}\mathcal{U}^{j_1,j_1}_{t_1,t_0}(\varrho_{\rm B})\right].
	\end{eqnarray}
	In general, this expression cannot be further simplified by using the property $\sum_{j_1}E_{j_1}=\oper$, like we did in the previous equation, because of the dependence on $j_1$-dependent terms of the time-dependent dephasing factor. However:
	\begin{proposition}\label{prop:2classicality_maxmix}
		Consider the process with initial state $\rho_{t_0}=\oper/d$ and any measurement PVM $\{P_x\}_{x}$. Then the process is $2$-classical in the two following cases:
		\begin{itemize}
			\item if the process is Markovian;
			\item if the Hamiltonians $\{H_j\}_j$ commute.
		\end{itemize}
	\end{proposition}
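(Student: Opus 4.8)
The plan is to verify the single nontrivial second-order consistency condition, $\sum_{x_1}\mathbb{P}_2(x_2,t_2;x_1,t_1)=\mathbb{P}_1(x_2,t_2)$, starting from the partially simplified expression~\eqref{eq:p2_maxmix}. For the maximally mixed initial state the innermost index summation has already forced the inner pair of the dephasing tensor to the diagonal form $\mathcal{U}^{j_1,j_1}_{t_1,t_0}$, via $E_{j_1}E_{\ell_1}=\delta_{j_1\ell_1}E_{j_1}$. The whole difficulty is that this surviving factor still carries a $j_1$-dependence, so that the $x_1$- and $j_1$-sums in~\eqref{eq:p2_maxmix} do not immediately factor out from the environmental contribution; one cannot naively apply $\sum_{j_1}E_{j_1}=\oper$ as was done at the level of $\mathbb{P}_1$ in~\eqref{eq:p1_maxmix}.

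First I would invoke Prop.~\ref{prop:markommute}. Under either hypothesis (Markovianity or commuting block Hamiltonians) its second identity, specialized to $n=2$ and $k=1$, replaces the equal-index map $\mathcal{U}^{j_1,j_1}_{t_1,t_0}$ by the identity inside the trace, yielding $\tr[\mathcal{U}^{j_2,\ell_2}_{t_2,t_1}\mathcal{U}^{j_1,j_1}_{t_1,t_0}(\varrho_{\rm B})]=\tr[\mathcal{U}^{j_2,\ell_2}_{t_2,t_1}(\varrho_{\rm B})]$. The point is that the right-hand side no longer depends on $j_1$: the environmental factor has been completely decoupled from the preparation--measurement factor, which is exactly what the computation needs.

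With this decoupling in hand the remaining steps are purely algebraic manipulations of the projectors. I would perform the $j_1$-sum using $\sum_{j_1}E_{j_1}=\oper$ together with $P_{x_1}^2=P_{x_1}$, then the $x_1$-sum using $\sum_{x_1}P_{x_1}=\oper$, and finally $E_{j_2}E_{\ell_2}=\delta_{j_2\ell_2}E_{j_2}$ to eliminate the off-diagonal terms. What survives is $\frac{1}{d}\sum_{j_2}\tr[\mathcal{U}^{j_2,j_2}_{t_2,t_1}(\varrho_{\rm B})]\,\tr[P_{x_2}E_{j_2}P_{x_2}]$. Since $\mathcal{U}^{j,j}$ is a unitary conjugation it is trace-preserving, so $\tr[\mathcal{U}^{j_2,j_2}_{t_2,t_1}(\varrho_{\rm B})]=\tr\varrho_{\rm B}=1$; a last use of $\sum_{j_2}E_{j_2}=\oper$ collapses the expression to $\frac{1}{d}\tr[P_{x_2}]=\mathbb{P}_1(x_2,t_2)$.

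The main obstacle---and the sole place where the hypotheses are used---is precisely the decoupling step. The \emph{unconditional} first identity of Prop.~\ref{prop:markommute} only removes an equal-index map sitting in the \emph{outermost} (latest-time) slot of the tensor, whereas here the equal-index map $\mathcal{U}^{j_1,j_1}$ occupies the \emph{innermost} slot; removing it genuinely requires the stronger second identity, valid only under Markovianity or commutativity. Absent either assumption the environmental factor retains its $j_1$-dependence and the $x_1$-sum fails to reconstruct $\mathbb{P}_1$, consistently with the fact that, as already seen in the qubit case, $2$-classicality of the maximally mixed state is not automatic in the general non-Markovian, non-commuting setting.
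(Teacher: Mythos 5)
Your proposal is correct and follows essentially the same route as the paper's proof: the decoupling of the innermost equal-index map $\mathcal{U}^{j_1,j_1}_{t_1,t_0}$ via the second identity of Prop.~\ref{prop:markommute} (which is indeed where both hypotheses enter), followed by the $j_1$-sum, the $x_1$-sum, and the orthogonality $E_{j_2}E_{\ell_2}=\delta_{j_2\ell_2}E_{j_2}$, exactly as in Eqs.~\eqref{eq:p2_maxmix_2} and the subsequent marginalization. Your remark distinguishing the unconditional first identity (outermost slot) from the conditional second one (inner slots) is an accurate reading of why the assumptions are needed.
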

	\begin{proof}
		Recalling Prop.~\ref{prop:markommute}, in both cases listed above we have
		\begin{equation}
			\tr\left[\mathcal{U}^{j_2,\ell_2}_{t_2,t_1}\mathcal{U}^{j_1,j_1}_{t_1,t_0}(\varrho_{\rm B})\right]=\tr\left[\mathcal{U}^{j_2,\ell_2}_{t_2,t_1}(\varrho_{\rm B})\right].
		\end{equation}
		Therefore, in both cases, Eq.~\eqref{eq:p2_maxmix} yields
		\begin{eqnarray}\label{eq:p2_maxmix_2}
			\mathbb{P}_2(x_2,t_2;x_1,t_1)&=&\frac{1}{d}\sum_{j_2,\ell_2}\sum_{j_1}\tr\left[P_{x_2}E_{j_2}P_{x_1}E_{j_1}P_{x_1}E_{\ell_2}P_{x_2}\right]\tr\left[\mathcal{U}^{j_2,\ell_2}_{t_2,t_1}(\varrho_{\rm B})\right]\nonumber\\
			&=&\frac{1}{d}\sum_{j_2,\ell_2}\tr\left[P_{x_2}E_{j_2}P_{x_1}E_{\ell_2}P_{x_2}\right]\tr\left[\mathcal{U}^{j_2,\ell_2}_{t_2,t_1}(\varrho_{\rm B})\right],
		\end{eqnarray}
		and marginalizing
		\begin{eqnarray}
			\sum_{x_1}\mathbb{P}_2(x_2,t_2;x_1,t_1)&=&\frac{1}{d}\sum_{j_2,\ell_2}\tr\left[P_{x_2}E_{j_2}E_{\ell_2}P_{x_2}\right]\tr\left[\mathcal{U}^{j_2,\ell_2}_{t_2,t_1}(\varrho_{\rm B})\right]\nonumber\\
			&=&\frac{1}{d}\sum_{j_2}\tr\left[P_{x_2}E_{j_2}P_{x_2}\right]=\frac{1}{d}\tr\left[P_{x_2}\right],
		\end{eqnarray}
		which is the same expression as in Eq.~\eqref{eq:p1_maxmix}.
	\end{proof}
	The above result generalizes what was already observed in the qubit case (cf.~Prop.~\ref{prop:qubit2}): choosing the initial state to be maximally mixed always ensures $2$-classicality, no matter how the system is measured. Incidentally, for non-Markovian processes, this proposition provides an operational way of checking whether the dephasing process is generated by a commutative family of Hamiltonians---a property that \textit{cannot} be checked by simply looking at the reduced dynamics of the process, no matter the choice of the initial state. Any violation of this condition signals the presence of non-commuting blocks.
	
	Again, the argument does not work for higher orders: similar computations can be carried out when marginalizing on the \textit{last} variable $x_1$, but not for the remaining ones. For example:
	\small
	\begin{eqnarray}
		&& \mathbb{P}_3(x_3,t_3;x_2,t_2;x_1,t_1) \nonumber\\
		&& = \frac{1}{d}\sum_{j_3,\ell_3}\sum_{j_2,\ell_2}\sum_{j_1,\ell_1}\tr\left[P_{x_3}E_{j_3}P_{x_2}E_{j_2}P_{x_1}E_{j_1}E_{\ell_1}P_{x_1}E_{\ell_2}P_{x_2}E_{\ell_3}P_{x_3}\right]\tr\left[\mathcal{U}^{j_3,\ell_3}_{t_3,t_2}\mathcal{U}^{j_2,\ell_2}_{t_2,t_1}\mathcal{U}^{j_1,\ell_1}_{t_1,t_0}(\varrho_{\rm B})\right]\nonumber\\
		&& = \frac{1}{d}\sum_{j_3,\ell_3}\sum_{j_2,\ell_2}\sum_{j_1}\tr\left[P_{x_3}E_{j_3}P_{x_2}E_{j_2}P_{x_1}E_{j_1}P_{x_1}E_{\ell_2}P_{x_2}E_{\ell_3}P_{x_3}\right]\tr\left[\mathcal{U}^{j_3,\ell_3}_{t_3,t_2}\mathcal{U}^{j_2,\ell_2}_{t_2,t_1}\mathcal{U}^{j_1,j_1}_{t_1,t_0}(\varrho_{\rm B})\right]\nonumber\\
		&& = \frac{1}{d}\sum_{j_3,\ell_3}\sum_{j_2,\ell_2}\sum_{j_1}\tr\left[P_{x_3}E_{j_3}P_{x_2}E_{j_2}P_{x_1}E_{j_1}P_{x_1}E_{\ell_2}P_{x_2}E_{\ell_3}P_{x_3}\right]\tr\left[\mathcal{U}^{j_3,\ell_3}_{t_3,t_2}\mathcal{U}^{j_2,\ell_2}_{t_2,t_1}(\varrho_{\rm B})\right]\nonumber\\	
		&& = \frac{1}{d}\sum_{j_3,\ell_3}\sum_{j_2,\ell_2}\tr\left[P_{x_3}E_{j_3}P_{x_2}E_{j_2}P_{x_1}E_{\ell_2}P_{x_2}E_{\ell_3}P_{x_3}\right]\tr\left[\mathcal{U}^{j_3,\ell_3}_{t_3,t_2}\mathcal{U}^{j_2,\ell_2}_{t_2,t_1}(\varrho_{\rm B})\right].
	\end{eqnarray}
	\normalsize
	Marginalizing over $x_1$ correctly yields the $2$-time probability distribution, but marginalizing over $x_2$ does not. Again, we can thus construct examples of $2$-classical processes which fail to be classical at higher orders.
	
	\section{Classical Markovian dephasing processes}\label{sec:4}
	
	We shall now reprise the scheme presented in Section~\ref{subsec:mubs}, i.e.~a dephasing process on a basis $\{\ket{\e_j}\}_{j=0,\dots,d-1}$ repeatedly measured in an unbiased basis as in Eq.~\eqref{eq:mubs}. We will now focus on the \textit{Markovian} case, i.e., by Prop.~\ref{prop:markov}, we will assume that the dephasing tensor satisfies Eq.~\eqref{eq:dephasing_regr}.
	
	In this case, recalling the discussion in Section~\ref{sec:1}, classicality is equivalent to the fulfillment of the two Chapman--Kolmogorov equalities, the first one being satisfied because of Prop.~\ref{prop:2classicality_mubs}, the process will be \textit{classical} at any order if and only if the following marginalizing condition for the $3$-time probability distribution holds:
	\begin{equation}\label{eq:markovmarginal}
		\mathbb{P}_2(x_3,t_3;x_1,t_1)=\sum_{x_2=0}^{d-1}\mathbb{P}_3(x_3,t_3;x_2,t_2;x_1,t_1).
	\end{equation}
	We will show that this happens in a particular case, starting from the qubit scenario and then discussing the general scenario.
	
	\subsection{Qubit case}
	
	Recall that, in the qubit case, the unbiased bases as defined by Eq.~\eqref{eq:mubs} are given by Eqs.~\eqref{eq:mubs_qubit}--\eqref{eq:mubs_qubit2}.
	
	\begin{proposition}\label{prop:markov_qubit}
		Let $\dim\hilb_{\rm S}=2$ and consider a Hamiltonian $\textbf{H}=\sum_j\ketbra{\e_j}{\e_j}\otimes H_j$ such that the corresponding dephasing process is Markovian; consider an initial state $\rho_{t_0}$ diagonal in the dephasing basis $\{\ket{\e_j}\}_{j=0,1}$, and choose as measurement basis any of the unbiased bases $\{\ket{\m_x(\phi)}\}_{x=0,1}$ in Eq.~\eqref{eq:mubs_qubit}. Then the process is classical if and only if the dephasing function,
		\begin{equation}\label{eq:phi}
			\varphi_{t,s}=\tr\left[\mathcal{U}^{0,1}_{t,s}(\varrho_{\rm B})\right],
		\end{equation}
		is real-valued.
	\end{proposition}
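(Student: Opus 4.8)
The plan is to exploit the Markovian structure to collapse the entire classicality hierarchy to a single equation, and then to reduce that equation to the realness of $\varphi_{t,s}$. Since the dephasing tensor factorizes (Prop.~\ref{prop:markov}), the multitime statistics is Markovian, so by the discussion of Section~\ref{sec:1} classicality is equivalent to the two Chapman--Kolmogorov equalities~\eqref{eq:consistency_markov1}--\eqref{eq:consistency_markov2}. The first of these is already guaranteed by $2$-classicality (Prop.~\ref{prop:2classicality_mubs}), so the whole statement rests on verifying the single marginalization condition~\eqref{eq:markovmarginal}, namely $\mathbb{P}_2(x_3,t_3;x_1,t_1)=\sum_{x_2}\mathbb{P}_3(x_3,t_3;x_2,t_2;x_1,t_1)$.

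First I would compute both sides explicitly for $d=2$ (so $\omega=-1$). Because $\rho_{t_0}$ is diagonal, only the terms with $j_1=\ell_1$ survive in Eq.~\eqref{eq:prob2} and in the $n=3$ formula; Prop.~\ref{prop:markommute} then lets me delete the factor $\mathcal{U}^{j_1,j_1}_{t_1,t_0}$, and since $\sum_{j_1}\braket{j_1|\rho_{t_0}|j_1}=1$ all dependence on the initial populations disappears. Writing $\tr[\mathcal{U}^{0,0}]=\tr[\mathcal{U}^{1,1}]=1$, $\tr[\mathcal{U}^{0,1}_{t,s}]=\varphi_{t,s}$ and $\tr[\mathcal{U}^{1,0}_{t,s}]=\overline{\varphi_{t,s}}$, the two-time distribution becomes
\begin{equation}
	\mathbb{P}_2(x_3,t_3;x_1,t_1)=\frac{1}{4}\left[1+(-1)^{x_1+x_3}\,\Re\varphi_{t_3,t_1}\right].
\end{equation}

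For the right-hand side I would marginalize the $n=3$ formula over $x_2$, using the fact that $\sum_{x_2=0}^{1}(-1)^{[(j_3-\ell_3)-(j_2-\ell_2)]x_2}$ equals $2$ when $j_3-\ell_3$ and $j_2-\ell_2$ have the same parity and vanishes otherwise. This leaves two families of terms: the fully diagonal ones ($j_3=\ell_3$, $j_2=\ell_2$), which sum to a constant, and the fully off-diagonal ones. Here the Markov factorization~\eqref{eq:dephasing_regr} is essential: it turns each surviving two-time tensor $\tr[\mathcal{U}^{j_3,\ell_3}_{t_3,t_2}\mathcal{U}^{j_2,\ell_2}_{t_2,t_1}(\varrho_{\rm B})]$ into the product $\tr[\mathcal{U}^{j_3,\ell_3}_{t_3,t_2}(\varrho_{\rm B})]\,\tr[\mathcal{U}^{j_2,\ell_2}_{t_2,t_1}(\varrho_{\rm B})]$, so that the four off-diagonal contributions collect into $(\varphi_{t_3,t_2}+\overline{\varphi_{t_3,t_2}})(\varphi_{t_2,t_1}+\overline{\varphi_{t_2,t_1}})=4\,\Re\varphi_{t_3,t_2}\,\Re\varphi_{t_2,t_1}$. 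The result is
\begin{equation}
	\sum_{x_2}\mathbb{P}_3(x_3,t_3;x_2,t_2;x_1,t_1)=\frac{1}{4}\left[1+(-1)^{x_1+x_3}\,\Re\varphi_{t_3,t_2}\,\Re\varphi_{t_2,t_1}\right],
\end{equation}
so that~\eqref{eq:markovmarginal} holds for all outcomes if and only if $\Re\varphi_{t_3,t_1}=\Re\varphi_{t_3,t_2}\,\Re\varphi_{t_2,t_1}$ for all $t_3\geq t_2\geq t_1$.

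The last step is to show this functional identity is equivalent to $\varphi$ being real. The key observation is that, since $\mathcal{U}^{0,1}_{t_3,t_1}=\mathcal{U}^{0,1}_{t_3,t_2}\mathcal{U}^{0,1}_{t_2,t_1}$ by composition of propagators, the factorization~\eqref{eq:dephasing_regr} yields the multiplicative law $\varphi_{t_3,t_1}=\varphi_{t_3,t_2}\,\varphi_{t_2,t_1}$ (equivalently $\varphi_{t,s}=\e^{-(\i\epsilon+\gamma/2)(t-s)}$). Substituting this into the marginalization condition and writing $\varphi=\Re\varphi+\i\,\Im\varphi$, the equality collapses to $\Im\varphi_{t_3,t_2}\,\Im\varphi_{t_2,t_1}=0$ for all admissible times; choosing $t_3-t_2=t_2-t_1$ forces $\Im\varphi_{t_2,t_1}\equiv 0$, i.e.~$\varphi$ real. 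The converse is immediate, since for real $\varphi$ the multiplicative law reads $\varphi_{t_3,t_1}=\varphi_{t_3,t_2}\varphi_{t_2,t_1}=\Re\varphi_{t_3,t_2}\,\Re\varphi_{t_2,t_1}$. I expect the main obstacle to be purely organizational: correctly tracking the root-of-unity phases through the $x_2$-marginalization and recognizing that the Markov factorization is exactly what converts a potential $\Re(\varphi_{t_3,t_2}\varphi_{t_2,t_1})$ into the product $\Re\varphi_{t_3,t_2}\,\Re\varphi_{t_2,t_1}$---the discrepancy between these two being precisely the obstruction to classicality.
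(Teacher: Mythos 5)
Your proposal is correct and follows essentially the same route as the paper: reduce classicality (via the Markov property) to the single marginalization condition~\eqref{eq:markovmarginal}, compute $\mathbb{P}_2$ and the $x_2$-marginalized $\mathbb{P}_3$ using the diagonality of $\rho_{t_0}$, Prop.~\ref{prop:markommute} and the factorization~\eqref{eq:dephasing_regr}, and identify the obstruction as $\Im\varphi_{t_3,t_2}\,\Im\varphi_{t_2,t_1}$. Your closed-form expressions for both sides match the paper's (which instead computes their difference directly), and your explicit argument that $\Im\varphi_{t_3,t_2}\Im\varphi_{t_2,t_1}\equiv 0$ forces $\Im\varphi\equiv 0$ by taking $t_3-t_2=t_2-t_1$ is in fact slightly more careful than the paper's, which asserts this step without comment.
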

	\begin{proof}
		As discussed, classicality in this case is ensured if and only if Eq.~\eqref{eq:markovmarginal} holds. For any initial state $\rho_{t_0}$ which is diagonal in the dephasing basis, we get\small
		\begin{eqnarray}\label{eq:markovqubit_p2}
			\mathbb{P}_2(x_3,t_3;x_1,t_1)&=&\frac{1}{8}\sum_{j_3,\ell_3}\sum_{j_1}(-1)^{-(j_3-\ell_3)x_3}(-1)^{(j_3-\ell_3)x_1}\braket{j_1|\rho_{t_0}|j_1}\tr\left[\mathcal{U}^{j_3,\ell_3}_{t_3,t_1}(\varrho_{\rm B})\right]\nonumber\\
			&=&\frac{1}{8}\sum_{j_3,\ell_3}(-1)^{-(j_3-\ell_3)(x_3-x_1)}\tr\left[\mathcal{U}^{j_3,\ell_3}_{t_3,t_1}(\varrho_{\rm B})\right],
		\end{eqnarray}\normalsize
		while\small
		\begin{eqnarray}
			\mathbb{P}_3(x_3,t_3;x_2,t_2;x_1,t_1)&=&\frac{1}{32}\sum_{j_3,\ell_3}\sum_{j_2,\ell_2}\sum_{j_1}(-1)^{-(j_3-\ell_3)x_3}(-1)^{[(j_3-\ell_3)-(j_2-\ell_2)]x_2}(-1)^{(j_2-\ell_2)x_1}\nonumber\\
			&&\times\braket{j_1|\rho_{t_0}|j_1}\tr\left[\mathcal{U}^{j_3,\ell_3}_{t_3,t_2}(\varrho_{\rm B})\right]\tr\left[\mathcal{U}^{j_2,\ell_2}_{t_2,t_1}(\varrho_{\rm B})\right]\nonumber\\
			&=&\frac{1}{32}\sum_{j_3,\ell_3}\sum_{j_2,\ell_2}(-1)^{-(j_3-\ell_3)x_3}(-1)^{[(j_3-\ell_3)-(j_2-\ell_2)]x_2}(-1)^{(j_2-\ell_2)x_1}\nonumber\\
			&&\times\tr\left[\mathcal{U}^{j_3,\ell_3}_{t_3,t_2}(\varrho_{\rm B})\right]\tr\left[\mathcal{U}^{j_2,\ell_2}_{t_2,t_1}(\varrho_{\rm B})\right],
		\end{eqnarray}
		\normalsize whence \small
		\begin{eqnarray}
			\sum_{x_2}	\mathbb{P}_3(x_3,t_3;x_2,t_2;x_1,t_1)&=&\frac{1}{32}\sum_{j_3,\ell_3}\sum_{j_2,\ell_2}\left(\sum_{x_2}(-1)^{[(j_3-\ell_3)-(j_2-\ell_2)]x_2}\right)(-1)^{-(j_3-\ell_3)x_3}(-1)^{(j_2-\ell_2)x_1}\nonumber\\
			&&\times\tr\left[\mathcal{U}^{j_3,\ell_3}_{t_3,t_2}(\varrho_{\rm B})\right]\tr\left[\mathcal{U}^{j_2,\ell_2}_{t_2,t_1}(\varrho_{\rm B})\right].
		\end{eqnarray}\normalsize
		The sum between parentheses is equal to $d=2$ when $(j_3-\ell_3)-(j_2-\ell_2)$ is an integer multiple of $d=2$, and zero otherwise. The quantity $(j_3-\ell_3)-(j_2-\ell_2)$ can take values at most between $\pm2$, whence the sum equals $2$ whenever this quantity equals either $-2,0,2$, and is zero otherwise. That is,
		\begin{eqnarray}\label{eq:eccedelta}
			\sum_{x_2}(-1)^{[(j_3-\ell_3)-(j_2-\ell_2)]x_2}&=&2\left(\delta_{j_3-\ell_3,j_2-\ell_2}+\delta_{j_3-\ell_3,j_2-\ell_2-2}+\delta_{j_3-\ell_3,j_2-\ell_2+2}\right)\nonumber\\
			&=&2\sum_{k=0,\pm1}\delta_{j_3-\ell_3,j_2-\ell_2+2k}.
		\end{eqnarray}
		We end up with
		\begin{eqnarray}\label{eq:markovqubit_p3}
			\sum_{x_2}\mathbb{P}_3(x_3,t_3;x_2,t_2;x_1,t_1)&=&\frac{1}{16}\sum_{j_3,\ell_3}(-1)^{-(j_3-\ell_3)(x_3-x_1)}\tr\left[\mathcal{U}^{j_3,\ell_3}_{t_3,t_2}(\varrho_{\rm B})\right]\nonumber\\&&\times\sum_{j_2,\ell_2}\left(\sum_{k=0,\pm1}\delta_{j_3-\ell_3,j_2-\ell_2+2k}\right)\tr\left[\mathcal{U}^{j_2,\ell_2}_{t_2,t_1}(\varrho_{\rm B})\right].
		\end{eqnarray}
		We can now compute the difference between the two-time probability and the marginalized three-time probability: noting that all terms with $j_3=\ell_3$ cancel out, and using the shorthand~\eqref{eq:phi} and the Markovianity property $\varphi_{t_3,t_1}=\varphi_{t_3,t_2}\varphi_{t_2,t_1}$ (cf.~Prop.~\ref{prop:markov}), we get\small
		\begin{eqnarray}
			&&16(-1)^{x_3-x_1}\left[\mathbb{P}_2(x_3,t_2;x_1,t_1)-\sum_{x_2}\mathbb{P}_3(x_3,t_3;x_2,t_2;x_1,t_1)\right]\nonumber\\
			&=&2\varphi_{t_3,t_1}+2\varphi_{t_3,t_1}^*-\varphi_{t_3,t_2}\varphi_{t_2,t_1}-\varphi_{t_3,t_2}\varphi_{t_2,t_1}^*-\varphi_{t_3,t_2}^*\varphi_{t_2,t_1}-\varphi_{t_3,t_2}^*\varphi_{t_2,t_1}^*\nonumber\\
			&=&\varphi_{t_3,t_2}\varphi_{t_2,t_1}+\varphi_{t_3,t_2}^*\varphi_{t_2,t_1}^*-\varphi_{t_3,t_2}\varphi_{t_2,t_1}^*-\varphi_{t_3,t_2}^*\varphi_{t_2,t_1}\nonumber\\
			&=&-4\Im\varphi_{t_3,t_2}\Im\varphi_{t_2,t_1},
		\end{eqnarray}\normalsize
		therefore this quantity vanishes if and only if the phase is real, thus completing the proof.
	\end{proof}
	The difference between the $2$-time correlation function and the $3$-time correlation function marginalized on its central variable is thus directly related to the magnitude of the imaginary part of the dephasing function $\varphi_{t,s}$. In particular, if the global Hamiltonian is time-independent, then
	\begin{equation}
		\varphi_{t,s}=\e^{-(t-s)\left(\frac{\gamma}{2}+\i\varepsilon\right)},
	\end{equation}
	and thus\small
	\begin{eqnarray}
		\mathbb{P}_2(x_3,t_2;x_1,t_1)-\sum_{x_2}\mathbb{P}_3(x_3,t_3;x_2,t_2;x_1,t_1)=\frac{1}{8}(-1)^{x_3-x_1}\e^{-\frac{\gamma}{2}(t_3-t_1)}\sin\left[\varepsilon(t_3-t_2)\right]\sin\left[\varepsilon(t_2-t_1)\right],
	\end{eqnarray}\normalsize
	whence classicality holds if and only if $\varepsilon=0$. We remark that, in the case in which $\textbf{H}$ is a Markovian dephasing-type spin--boson model (i.e.~with the spin--boson coupling being mediated by a flat form factor, cf.~\cite{Davide-2}), $\varepsilon$ corresponds to the excitation energy of the qubit this means that the process is classical in the limit in which the excitation energy vanishes---that is, the difference between the energy levels of the qubit is negligible.
	
	Finally, for completeness let us compute the whole $2$-time and $3$-time statistics in the Markovian qubit scenario under the assumption of a real dephasing function. In this case, we have explicitly
	\begin{eqnarray}
		\mathbb{P}_2(0,t_3;0,t_1) &=& \mathbb{P}_2(1,t_3;1,t_1) =  \frac 14 (1 + \varphi_{t_3,t_1} ) ,\nonumber \\
		\mathbb{P}_2(0,t_3;1,t_1) &=& \mathbb{P}_2(1,t_3;0,t_1)= \frac 14 (1 - \varphi_{t_3,t_1} ) ,
	\end{eqnarray}	
	\begin{eqnarray}
		\mathbb{P}_3(0,t_3;0,t_2;0,t_1) &=& \mathbb{P}_3(1,t_3;1,t_2;1,t_1) =\frac 18 (1 + \varphi_{t_3,t_2} + \varphi_{t_2,t_1} + \varphi_{t_3,t_1} ) ,\nonumber\\
		\mathbb{P}_3(0,t_3;1,t_2;0,t_1) &=& \mathbb{P}_3(1,t_3;0,t_2;1,t_1)=\frac 18 (1 - \varphi_{t_3,t_2} - \varphi_{t_2,t_1} + \varphi_{t_3,t_1} ) ,\nonumber\\
		\mathbb{P}_3(0,t_3;0,t_2;1,t_1) &=& 		\mathbb{P}_3(1,t_3;1,t_2;0,t_1) =\frac 18 (1 + \varphi_{t_3,t_2} - \varphi_{t_2,t_1} - \varphi_{t_3,t_1} ) ,\\
		\mathbb{P}_3(1,t_3;0,t_2;0,t_1) &=& \mathbb{P}_3(0,t_3;1,t_2;1,t_1) = \frac 18 (1 - \varphi_{t_3,t_2} + \varphi_{t_2,t_1} - \varphi_{t_3,t_1} ) ,\nonumber
	\end{eqnarray}
	and one immediately checks that all consistency conditions are satisfied.
	
	\subsection{General case}
	A generalization of Prop.~\ref{prop:markov_qubit} holds in the general case, $\dim\hilb_{\rm S}=d$, by properly upgrading the requirement of a real dephasing function to the $d$-dimensional scenario. We shall start with a simple preliminary lemma:
	\begin{lemma}\label{lemma}
		Let $d\in\mathbb{N}$ and $j,\ell=0,\dots,d-1$. Then, given $h\in\{\pm1,\pm2,\dots,\pm(d-1)\}$, the following equality holds:
		\begin{equation}\label{eq:delta}
			\sum_{j\neq\ell}\sum_{k=0,\pm1}\delta_{j-\ell,h+kd}=d,
		\end{equation}
	with $\delta_{j\ell}$ being the Kronecker delta.
	\end{lemma}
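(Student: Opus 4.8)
The plan is to read the double sum purely as a counting problem. First I would fix a pair $(j,\ell)$ and observe that the inner sum $\sum_{k=0,\pm1}\delta_{j-\ell,\,h+kd}$ equals $1$ precisely when $j-\ell$ coincides with one of the three shifted values $h-d,\,h,\,h+d$, and vanishes otherwise. Hence the entire left-hand side of~\eqref{eq:delta} counts the number of ordered pairs $(j,\ell)$ with $j\neq\ell$ whose difference $m:=j-\ell$ belongs to the set $\{h-d,\,h,\,h+d\}$; equivalently, the number of admissible pairs with $m\equiv h \pmod d$. Since distinct values of $k$ give distinct shifted targets $h+kd$, no pair is counted twice, so the inner $k$-sum is an honest indicator.

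Next I would pin down the range of the difference. As $j,\ell\in\{0,\dots,d-1\}$, the value $m=j-\ell$ runs over the integers in $[-(d-1),\,d-1]$, and the constraint $j\neq\ell$ removes only $m=0$. Because $h\in\{\pm1,\dots,\pm(d-1)\}$ we have $h\not\equiv 0\pmod d$, so $m=0$ never satisfies the congruence and the exclusion $j\neq\ell$ is automatic. The key elementary fact is that exactly \emph{two} of the three candidates $h-d,\,h,\,h+d$ lie in the admissible window: writing $r$ for the unique representative of $h$ inside $\{1,\dots,d-1\}$ (that is, $r=h$ if $h>0$ and $r=h+d$ if $h<0$), the two in-range differences are $r$ and $r-d$, with $|r-d|=d-r$, while the third shift overshoots $\pm(d-1)$.

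Finally I would count the pairs realizing each admissible difference: a positive difference $m>0$ arises from $d-m$ pairs (take $\ell\in\{0,\dots,d-1-m\}$ and $j=\ell+m$), and a negative one from $d-|m|$ pairs by symmetry. Adding the contributions of $m=r$ and $m=r-d$ gives $(d-r)+\bigl(d-(d-r)\bigr)=d$, independently of $h$, which is the claim. The only genuinely delicate point—and thus the main obstacle—is the boundary bookkeeping: verifying that precisely two (never one or three) of the shifted representatives fall strictly inside $[-(d-1),d-1]$ and that the count $d-|m|$ is correct up to the extremes $|m|=d-1$. A brief case split on the sign of $h$ disposes of both concerns and completes the argument.
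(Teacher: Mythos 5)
Your argument is correct and follows essentially the same route as the paper's proof: both reduce the double sum to counting ordered pairs $(j,\ell)$ whose difference equals one of the three shifts $h-d,\,h,\,h+d$, use the count $d-|m|$ for a fixed difference $m$, and observe that only two of the three shifts contribute, with contributions summing to $d$. Your use of the representative $r$ merely repackages the paper's case split on the sign of $h$.
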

	\begin{proof}
		Given $k\in\mathbb{N}$, the cardinality $\#$ of the solutions of the equation $j-\ell=k$ can be simply shown to be
		\begin{equation}\label{eq:number}
			\#\left\{(j,\ell)\in\{1,\dots,d\}^2:\;j-\ell=k\right\}=\begin{cases}
				d-|k|,&|k|=0,1,\dots,d-1;\\
				0,&|k|=d,d+1,\dots,
			\end{cases}
		\end{equation}
		obviously with all solutions being characterized by $j=\ell$ iff $k=0$ and $j\neq\ell$ iff $k\neq0$.
		
		Now, the quantity in Eq.~\eqref{eq:delta} corresponds to the cumulative number of solutions of the three equations
		\begin{equation}
			j-\ell=h,\qquad j-\ell=h-d,\qquad j-\ell=h+d.
		\end{equation}
		Suppose e.g.~$h>0$. Then, by Eq.~\eqref{eq:number}, the first equation admits $d-h$ solutions, the second one admits $d-|h-d|=h$ solutions, and the third one does not admit solutions; whence Eq.~\eqref{eq:delta} is proven. The same holds for $h<0$.
	\end{proof}
	
	\begin{proposition}\label{prop:markov_qudit}
		Let $\dim\hilb_{\rm S}=d$ and consider a Hamiltonian $\textbf{H}=\sum_j\ketbra{\e_j}{\e_j}\otimes H_j$ such that the corresponding dephasing process is Markovian; consider an initial state $\rho_{t_0}$ diagonal in the dephasing basis $\{\ket{\e_j}\}_{j=0,\dots,d-1}$, and choose as measurement basis $\{\ket{\m_x(\boldsymbol\phi)}\}_{x=0,\dots,d-1}$ as in Eq.~\eqref{eq:mubs}. Assume that the following equality holds:
		\begin{equation}\label{eq:matrix}
			\tr\;\mathcal{U}^{j,\ell}_{t,s}(\varrho_{\rm B})=\begin{cases}
				1,&j=\ell;\\
				\varphi_{t,s},&j\neq\ell
			\end{cases}
		\end{equation}
		for some real-valued function $\varphi_{t,s}$. Then the process is classical.
	\end{proposition}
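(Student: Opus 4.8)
The plan is to use the reduction, valid in the Markovian case, of classicality to the single Chapman--Kolmogorov equation~\eqref{eq:markovmarginal}: the first of the two consistency conditions is precisely the $2$-classicality already established by Prop.~\ref{prop:2classicality_mubs}, so it suffices to show that marginalizing the $3$-time distribution over its \emph{central} variable $x_2$ reproduces $\mathbb{P}_2(x_3,t_3;x_1,t_1)$. I would therefore compute both sides explicitly, exactly as in the qubit proof of Prop.~\ref{prop:markov_qubit}, and compare them.

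First I would simplify each distribution. Since $\rho_{t_0}$ is diagonal, in Eqs.~\eqref{eq:prob1}--\eqref{eq:prob2} and their $n=3$ analogue only the terms $j_1=\ell_1$ survive and the phases $\e^{-\i(\phi_{j_1}-\phi_{\ell_1})}$ disappear; using the Markovian factorization~\eqref{eq:dephasing_regr}, the identity $\tr\,\mathcal{U}^{j_1,j_1}_{t_1,t_0}(\varrho_{\rm B})=1$ and $\sum_{j_1}\braket{j_1|\rho_{t_0}|j_1}=1$, the index $j_1$ is summed away. Writing $a=j_3-\ell_3$, $b=j_2-\ell_2$ and invoking the hypothesis~\eqref{eq:matrix} to replace each off-diagonal tensor entry by $\varphi_{t,s}$ (and each diagonal one by $1$), I would obtain
\[
\mathbb{P}_2(x_3,t_3;x_1,t_1)=\frac{1}{d^2}+\frac{\varphi_{t_3,t_1}}{d^3}\sum_{j_3\neq\ell_3}\omega^{-a(x_3-x_1)},
\]
while the three-time distribution becomes a double sum over $(j_3,\ell_3)$ and $(j_2,\ell_2)$ weighted by $\omega^{-ax_3}\,\omega^{(a-b)x_2}\,\omega^{bx_1}$, with tensor factor $1$ or $\varphi_{t_3,t_2}$ according to whether $a$ vanishes, times $1$ or $\varphi_{t_2,t_1}$ according to whether $b$ vanishes.

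The decisive step is the marginalization over $x_2$. As in Eq.~\eqref{eq:eccedelta}, $\sum_{x_2}\omega^{(a-b)x_2}=d\sum_{k=0,\pm1}\delta_{a-b,kd}$, since $|a-b|\leq 2(d-1)<2d$, so the central outcome forces $b\equiv a\pmod d$. The terms with $a=0$ then force $b=0$ and contribute $d^2$, reproducing the constant $1/d^2$. For $a\neq0$ one automatically has $b\neq0$, hence tensor factor $\varphi_{t_3,t_2}\varphi_{t_2,t_1}$; crucially, $b\equiv a\pmod d$ makes $\omega^{bx_1}=\omega^{ax_1}$ \emph{independent} of which admissible $b$ occurs, so the inner sum over $(j_2,\ell_2)$ degenerates into the pure count $\sum_{j_2\neq\ell_2}\sum_{k=0,\pm1}\delta_{a-b,kd}=d$, which is exactly Lemma~\ref{lemma}. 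This collapses the double sum to a single sum over $(j_3,\ell_3)$,
\[
\sum_{x_2}\mathbb{P}_3(x_3,t_3;x_2,t_2;x_1,t_1)=\frac{1}{d^2}+\frac{\varphi_{t_3,t_2}\varphi_{t_2,t_1}}{d^3}\sum_{j_3\neq\ell_3}\omega^{-a(x_3-x_1)},
\]
and comparison with $\mathbb{P}_2$, using the Markov relation $\varphi_{t_3,t_1}=\varphi_{t_3,t_2}\varphi_{t_2,t_1}$ read off from~\eqref{eq:dephasing_regr} for $j\neq\ell$, yields equality, hence Eq.~\eqref{eq:markovmarginal} and classicality.

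I expect the central marginalization to be the main obstacle: one must verify that on the constraint set $b\equiv a\pmod d$ the phase $\omega^{bx_1}$ collapses to $\omega^{ax_1}$, which is what lets Lemma~\ref{lemma} eliminate one index pair and recover the one-sum structure of $\mathbb{P}_2$. Finally, it is worth recording where reality enters: hypothesis~\eqref{eq:matrix} assigns the \emph{same} value $\varphi_{t,s}$ to every off-diagonal entry, and since $\tr\,\mathcal{U}^{\ell,j}_{t,s}(\varrho_{\rm B})=\overline{\tr\,\mathcal{U}^{j,\ell}_{t,s}(\varrho_{\rm B})}$ this is self-consistent only if $\varphi_{t,s}$ is real; thus the reality assumption is precisely what renders the uniform-off-diagonal hypothesis admissible and, as in the qubit case where the residual obstruction was $\propto\Im\varphi_{t_3,t_2}\,\Im\varphi_{t_2,t_1}$, what makes the obstruction vanish.
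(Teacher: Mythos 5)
Your proposal is correct and follows essentially the same route as the paper: reduction to the single marginalization condition~\eqref{eq:markovmarginal} via Prop.~\ref{prop:2classicality_mubs}, the discrete-Fourier identity~\eqref{eq:eccedelta2} for the sum over $x_2$, the counting Lemma~\ref{lemma} to collapse the $(j_2,\ell_2)$ sum, and the Markov factorization $\varphi_{t_3,t_1}=\varphi_{t_3,t_2}\varphi_{t_2,t_1}$. Your explicit remarks that $\omega^{bx_1}=\omega^{ax_1}$ on the constraint set $b\equiv a\pmod d$ (which is what licenses pulling the phase out before applying the lemma) and that the reality of $\varphi_{t,s}$ is forced by the uniform-off-diagonal hypothesis together with $\tr\,\mathcal{U}^{\ell,j}_{t,s}(\varrho_{\rm B})=\overline{\tr\,\mathcal{U}^{j,\ell}_{t,s}(\varrho_{\rm B})}$ are both correct and left implicit in the paper.
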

	In words, in the larger-dimensional scenario classicality holds under the following, quite restrictive, requirement: all off-diagonal elements of the density operator in the dephasing basis must evolve in the same way---in particular, with real-valued dephasing factors.
	\begin{proof}
		Assume Eq.~\eqref{eq:matrix}. The two-time probability is given by
		\begin{equation}
			\mathbb{P}_2(x_3,t_3;x_1,t_1)=\frac{1}{d^3}\left[d+\varphi_{t_3,t_1}\sum_{j_3\neq\ell_3} \omega^{-(j_3-\ell_3)(x_3-x_1)}\right].
		\end{equation}
		As for the three-time probability, notice that Eq.~\eqref{eq:eccedelta} generalizes as
		\begin{eqnarray}\label{eq:eccedelta2}
			\sum_{x_2}(-1)^{[(j_3-\ell_3)-(j_2-\ell_2)]x_2}=d\sum_{k=0,\pm1}\delta_{j_3-\ell_3,j_2-\ell_2+kd},
		\end{eqnarray}
		whence, by following analogous computations as in the qubit case,\small
		\begin{eqnarray}
			\sum_{x_3=0}^{d-1}\mathbb{P}_3(x_3,t_3;x_2,t_2;x_1,t_1)
			&=&\!\frac{1}{d^4}\!\left[d^2\!+\!\varphi_{t_3,t_1}\varphi_{t_2,t_1}\!\sum_{j_3\neq\ell_3}\! \omega^{-(j_3-\ell_3)(x_3-x_1)}\sum_{j_2\neq\ell_2}\sum_{k=0,\pm1}\delta_{j_3-\ell_3,j_2-\ell_2+kd}\right]\nonumber\\
			&=&\!\frac{1}{d^4}\!\left[d^2+d\,\varphi_{t_3,t_1}\varphi_{t_2,t_1}\!\sum_{j_3\neq\ell_3}\! \omega^{-(j_3-\ell_3)(x_3-x_1)}\right]\nonumber\\
			&=&\mathbb{P}_2(x_3,t_3;x_1,t_1),
		\end{eqnarray}\normalsize
		where we have applied Lemma~\ref{lemma}.
	\end{proof}
	
	\section{Conclusions}\label{sec:5}
	
	We have analyzed the multitime statistics associated with generic quantum dephasing processes, finding scenarios in which the family of probability distributions is either classical up to a finite order or exhibits a full-fledged classical character. In particular, excluding the case in which the system is probed via measurements compatible with the dephasing basis, a classical behavior emerges for a wide class of measurement bases which are unbiased with respect to the dephasing basis; in particular, in the Markovian case sufficient conditions for classicality at any order can be found. As a consequence of this property, the following dichotomy can be observed; classicality can be achieved in two very different scenarios: either the measurement system is \textit{fully compatible} with the dephasing basis, or it is \textit{fully incompatible} with it.
	
	Besides, a simple and interesting link between classicality and the commutativity of the Hamiltonians which generate the dephasing dynamics: when preparing the system in the maximally mixed state, the corresponding process will be $2$-classical, regardless the choice of the measurement basis, whenever these Hamiltonians commute; in the non-Markovian case, any violation of classicality at the level of $2$-time correlation functions will be a witness of noncommutativity. More generally, while the commutativity between the Hamiltonians has no direct influence on the reduced dynamics of the system, it \textit{does} have a potentially crucial role when taking into account the role of interventions on the system, since it implies a highly nontrivial constrain on the $n$-time dephasing tensor which affects the property of the multitime statistics of the process. A more systematic study of the role of commutativity in the properties of dephasing processes is highly desirable.
	
	These results reveal a key difference between Markovian and non-Markovian dephasing processes. In the Markovian case, a genuinely time-dependent dephasing process can be fully classical, in the sense adopted in the present paper, modulo a proper choice of initial state and a suitable choice of measurement basis, i.e.~any of the unbiased bases in Eq.~\eqref{eq:mubs}. In the qubit case, this choice of basis is indeed necessary other than sufficient to obtain classicality; we conjecture that the same holds for the qu$d$it case. Instead, in the non-Markovian case the same choices of initial state and measurement basis do not guarantee classicality: only $2$-classicality is guaranteed. We may conjecture that, in the non-Markovian case, a pure dephasing process can only be classical, possibly modulo very special choices of initial state, if one performs a fully compatible measurement with respect to the dephasing basis---that is, no genuinely time-dependent dephasing process can ultimately ``hide'' its quantumness.
	
	In this regard, we recall that, in~\cite{Andrea-1,Andrea-2}, an interesting characterization of classical Markovian processes was found: a Markovian process was found to be classical if and only if it can be reproduced by means of an initial system $\tilde{\rho}_{t_0}$ which is diagonal in the \textit{measurement} basis and a family of propagators $\tilde{\Lambda}_{t,s}$ satisfying the non-coherence-generating-and-detecting (NCGD) property:	
	\begin{equation}\label{NCGD}
		\Delta \circ \tilde{\Lambda}_{t_3,t_2} \circ \Delta \circ \tilde{\Lambda}_{t_2,t_1} \circ \Delta = \Delta \circ \tilde{\Lambda}_{t_3,t_1} \circ \Delta ,
	\end{equation}
	where $\Delta = \sum_x P_x(\cdot) P_x$ denotes the completely dephasing channel with respect to the \textit{measurement} basis. Importantly, both $\tilde{\rho}_{t_0}$ and $\tilde{\Lambda}_{t,s}$ are generally different from the ``physical'' initial state and propagator of the process. The Markovian scenario considered in this paper required a physical initial state which is diagonal in the dephasing basis; however, the family of physical propagators ${\Lambda}_{t,s}$ \textit{does} satisfy Eq.~\eqref{NCGD}. Indeed, one shows that
	\begin{equation}
		\Delta \circ {\Lambda}_{t,s} \circ \Delta = {\Lambda}_{t,s} \circ \Delta,
	\end{equation}
	which implies Eq.~\eqref{NCGD}. Note that, by measuring the system with respect to an arbitrary MUB, one creates quantum coherences with respect to the dephasing basis in an extreme manner: however, such generated coherences cannot be detected at a later time by means of the employed measurement basis.

	We recall that, due to the seminal paper by Leggett and Garg~\cite{LG1,LG2}, a violation of classicality is
	understood as measurement invasiveness (impossibility to perform a  measurement without altering the state of the system). Recently \cite{Andrea-exp} it was experimentally verified that the violation of the Kolmogorov conditions is proportional to the amount of quantum coherence (of the measured observable). Such a connection between coherence and non-classicality was probed in a time-multiplexed optical quantum walk~\cite{Andrea-exp} (cf.~also~\cite{LG2} for other experiments testing the violation of Leggett--Garg inequalities).
	
	To conclude, reprising the discussion in Section~\ref{sec:1}, it will be useful to compare our results to the ones in~\cite{Lukasz}. The authors, while still taking into account dephasing phenomena, analyze the different scenario in which the system is reset at its initial state after each measurement. With this approach, a simple and elegant characterization of classicality is found: using our nomenclature, the multitime statistics corresponding to a dephasing process which is repeatedly measured \textit{and} reset to a given initial state is classical at any order \textit{if and only if} the environment Hamiltonians $\{H_j\}_j$ commute, provided that a nondegeneracy assumption is satisfied;  This scenario shows important difference from the results obtained in the present paper, in which the commutativity of $\{H_j\}_j$, while playing a distinguished role (cf.~the results in Section~\ref{subsec:maxmix}), is neither sufficient not necessary for classicality. Consequently, the statistics with and without re-preparation of the system in the initial state behave in a starkly different way when it comes to classicality, despite the fact that the two statistics can be mapped into each other~\cite{Lukasz2}.
	
	Finally, as a further connection between the two approaches, we observe that the authors of~\cite{Lukasz} examine the situation in which the system is measured with the eigenbasis of the Pauli operators $\sigma_x$, $\sigma_y$, and propose the use of MUBs as a higher-generalization of such results---the very same scenario considered in this work, in which MUBs do indeed play a decisive role.
	
	This brief discussion strongly hints at the existence of a deeper connection between the results of~\cite{Lukasz} and those presented in this work; this shall be examined in the future.
	
	\section*{Acknowledgments}
	
	We acknowledge valuable discussions with Fattah Sakuldee and Lukasz Cywi\'nski. DL was partially supported by Istituto Nazionale di Fisica Nucleare (INFN) through the project “QUANTUM” and by the Italian National Group of Mathematical Physics (GNFM--INdAM), and acknowledges support by MIUR via PRIN 2017 (Progetto di Ricerca di Interesse Nazionale), project QUSHIP (2017SRNBRK); he also thanks the Institute of Physics at the Nicolaus University in Toru\'n for its hospitality. DC was supported by the Polish National Science Center project No. 2018/30/A/ST2/00837.

\end{document}